\documentclass[11pt,a4paper]{article}
\usepackage{amsmath, amssymb, amsthm, mathtools}
\usepackage[margin=1in]{geometry}
\usepackage{enumerate}
\usepackage{hyperref}
\usepackage{xcolor}
\usepackage{listings}
\hypersetup{
  colorlinks=true,
  linkcolor=blue!50!black,
  citecolor=blue!50!black,
  urlcolor=blue!50!black
}

\lstdefinelanguage{Mathematica}{
  morecomment=[s]{(*}{*)},
  morestring=[b]",
  sensitive=true,
}
\theoremstyle{plain}
\newtheorem{theorem}{Theorem}[section]

\newtheorem{lemma}[theorem]{Lemma}
\newtheorem{corollary}[theorem]{Corollary}

\theoremstyle{definition}
\newtheorem{definition}[theorem]{Definition}
\newtheorem{example}[theorem]{Example}

\theoremstyle{remark}
\newtheorem{remark}[theorem]{Remark}

\newcommand{\C}{\mathbb{C}}
\newcommand{\R}{\mathbb{R}}
\newcommand{\Z}{\mathbb{Z}}

\newcommand{\PGL}{\operatorname{PGL}}

\renewcommand{\d}{\,\mathrm{d}}

\title{\textbf{A Generalisation of Goursat's Algorithm for Integration in Finite Terms}}
\author{Sam Blake}
\date{\today}

\begin{document}
\maketitle

\begin{abstract}
We give a self-contained, modern exposition of \'Edouard Goursat's 1887 theorem on pseudo-elliptic integrals --- those integrals of the form $\int F(t)\,\d t/\sqrt{R(t)}$ with $R$ a cubic or quartic polynomial that, despite living on a genus-$1$ algebraic curve, admit elementary antiderivatives. After reviewing integration in finite terms and Liouville's theorem, we present Goursat's two main theorems with proofs phrased in the language of M\"obius automorphisms of the underlying hyperelliptic curve. We then develop a cube-root analog: for integrals of the form $\int F(t)\,\d t/\sqrt[3]{R(t)}$ with $R$ cubic, an order-$3$ M\"obius substitution cyclically permuting the roots of $R$ induces an eigendecomposition into three pieces. Two of the three eigenpieces (eigenvalues $1$ and $\omega^2$, where $\omega = e^{2\pi i/3}$) descend through a chain of substitutions to genus-$0$ curves and yield elementary antiderivatives; the middle eigenpiece (eigenvalue $\omega$) descends only to the genus-$1$ curve $y^3 = x(x-K)$ and is generically transcendental. 
\end{abstract}

\section{Introduction}
\label{sec:intro}

Among the very first surprises of integral calculus is the asymmetry between
\[
  \int \frac{\d t}{\sqrt{1-t^2}} = \arcsin(t)
\qquad\text{and}\qquad
  \int \frac{\d t}{\sqrt{(1-t^2)(1-k^2t^2)}} = ?
\]
The first integral is elementary; the second is not. Both integrands look formally similar --- a rational function divided by the square root of a polynomial --- yet only the first admits a closed-form antiderivative in terms of the standard library of elementary functions. The deep reason is geometric: the curve $y^2 = 1 - t^2$ has \emph{genus zero} (it is the unit circle), while the curve $y^2 = (1-t^2)(1-k^2 t^2)$ has \emph{genus one}, and on a curve of positive genus most $1$-forms have no elementary antiderivative. The classical incomplete elliptic integral on the right is the prototype.\\

This raises a refined question: can a $1$-form on a positive-genus curve \emph{ever} integrate elementarily? The answer is yes, but only for very special integrands. The \emph{pseudo-elliptic integrals} are precisely those of the form $\int F(t)\,\d t/\sqrt{R(t)}$ which look elliptic but evaluate elementarily. Their existence is non-obvious, and producing systematic methods for detecting and reducing them was a recurrent theme in nineteenth-century analysis. \'Edouard Goursat's note of 1887~\cite{Goursat1887} is the most elegant general account.\\

The purpose of this paper is twofold. First, in Sections~\ref{sec:liouville}--\ref{sec:goursat}, we give a self-contained modern exposition of Goursat's two main theorems. The framework is standard differential algebra: an integral is elementary precisely when it satisfies the structural conclusion of Liouville's theorem~\cite{Rosenlicht1972}, and Goursat's hypothesis turns out to be a clean Galois-theoretic condition imposed by the M\"obius automorphisms of the curve $y^2 = R(t)$. Second, in Sections~\ref{sec:cuberoot}--\ref{sec:examples}, we develop a cube-root analog of Goursat's theory. The natural setting is the curve $y^3 = R(t)$, and we prove an analog of Goursat's first theorem in which the binary anti-invariance condition $F(t)+F(S(t))=0$ is replaced by an eigenvalue condition under an order-$3$ M\"obius substitution.\\

The cube-root case exhibits a structural feature with no analog in the square-root world: of the three eigenspaces under a cyclic-Möbius action, two lead to elementary antiderivatives but the middle one stalls at a \emph{cube-root elliptic curve} $y^3=x(x-K)$. This curve also has genus one, and its standard holomorphic differential is generically transcendental. The result is a clean sufficient condition for a cube-root integral to be elementary, together with a precise description of the obstruction when the condition fails.\\

The possibility of a cube-root extension of Goursat's procedure was hinted at over many years by Martin Welz in extensive postings on \texttt{sci.math.symbolic}~\cite{WelzNewsgroup}, where particular cube-root integrands were observed to behave in a Goursat-like manner under cyclic M\"obius substitutions; the present paper gives a complete formulation, proof, and algorithmic implementation of such an extension.

\medskip

The structure of the paper is as follows. Section~\ref{sec:liouville} reviews integration in finite terms and states Liouville's principle in the form we will use. Section~\ref{sec:goursat} is the modern presentation of Goursat: Theorem~\ref{thm:goursat1} (single-involution) and Theorem~\ref{thm:goursat2} (Klein four-group sum). Section~\ref{sec:cuberoot} develops the cube-root analog: setup, statement of Theorem~\ref{thm:cubic-goursat}, proof, and a discussion of the middle-eigenvalue obstruction. Section~\ref{sec:examples} works through six representative examples. Section~\ref{sec:remarks} closes with remarks on extensions and the relation to differential Galois theory.

\section{Integration in finite terms and Liouville's theorem}
\label{sec:liouville}

The framework for making ``elementary integration'' precise is the differential algebra of Joseph Liouville, refined by Ritt~\cite{Ritt1948} and Rosenlicht~\cite{Rosenlicht1972}, and given its modern algorithmic formulation through the work of Risch~\cite{Risch1969,Risch1970}, Davenport~\cite{Davenport1981}, Trager~\cite{Trager1984}, and Bronstein~\cite{Bronstein2005}.

\begin{definition}\label{def:diffield}
A \emph{differential field} is a field $K$ of characteristic zero equipped with a derivation, i.e.\ a map $D \colon K \to K$ satisfying $D(a+b)=Da+Db$ and $D(ab)=aDb+bDa$. We write $a' = Da$ throughout.
\end{definition}

The basic example is $K = \C(t)$ with $D = \d/\d t$. We will also consider algebraic extensions $K\subseteq L$, where the derivation extends uniquely.

\begin{definition}\label{def:elementary}
Let $K\subseteq L$ be differential fields. We say $L$ is an \emph{elementary extension} of $K$ if there is a tower
\[
  K = K_0 \subseteq K_1 \subseteq K_2 \subseteq \cdots \subseteq K_n = L
\]
such that each $K_{i+1} = K_i(\theta_i)$ where one of the following holds:
\begin{enumerate}[(i)]
  \item $\theta_i$ is \emph{algebraic} over $K_i$;
  \item $\theta_i$ is an \emph{exponential}: $\theta_i'/\theta_i \in K_i$;
  \item $\theta_i$ is a \emph{logarithm}: $\theta_i' = \eta'/\eta$ for some $\eta \in K_i$.
\end{enumerate}
An element $f \in K$ is said to have an \emph{elementary integral} (or \emph{integrate in finite terms}) if there exists an elementary extension $L \supseteq K$ and an element $g\in L$ with $g' = f$.
\end{definition}

The class of elementary functions in the calculus textbook sense corresponds to elementary extensions of $\C(t)$.

\begin{theorem}[Liouville's Theorem; \cite{Rosenlicht1972}]\label{thm:liouville}
Let $K$ be a differential field with field of constants $\C$. If $f\in K$ has an elementary integral, then there exist $v\in K$, constants $c_1,\dots,c_n\in \C$, and elements $u_1,\dots,u_n \in K$ such that
\begin{equation}\label{eq:liouville}
  f \;=\; v' \;+\; \sum_{i=1}^n c_i\,\frac{u_i'}{u_i}.
\end{equation}
\end{theorem}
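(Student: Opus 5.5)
The proof goes by induction on the length $n$ of the elementary tower $K=K_0\subseteq\cdots\subseteq K_n=L$. The statement to carry through the induction is the one in the theorem, but for an arbitrary differential field $K$ with constant field $\C$. One technical point must be handled first: adjoining $\theta_i$ might enlarge the field of constants. Since $\C$ is algebraically closed, I will first reduce to towers in which no new constants appear, which is the setting of Rosenlicht's version. A new constant would be algebraic over $\C$ in the algebraic case, hence already in $\C$. In the transcendental cases a new constant can be avoided by replacing $\theta_i$ suitably. The case $n=0$ is trivial: take $v=g$ and no logarithmic terms.

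For the inductive step, suppose $f\in K$ and $g\in L$ with $g'=f$. Apply the induction hypothesis to the field $K_1=K(\theta)$ and the tower $K_1\subseteq\cdots\subseteq L$, where $f$ now lies in $K_1$. This gives
\[
f=v'+\sum_i c_i\,\frac{u_i'}{u_i}
\]
with $v,u_i\in K(\theta)$. It remains to push this representation down to $K$. There are three cases.

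\textbf{Algebraic case.} Let $\theta$ be algebraic over $K$. Take a normal closure, and sum the identity over all $K$-embeddings $\sigma$ of $K(\theta)$ into it; these embeddings commute with the derivation. Dividing by the degree $m$ gives
\[
f=\Big(\tfrac1m\operatorname{Tr}v\Big)'+\sum_i \tfrac{c_i}{m}\,\frac{N(u_i)'}{N(u_i)},
\]
using $\sum_\sigma \sigma(u)'/\sigma(u)=N(u)'/N(u)$. Both the trace and the norm lie in $K$, so this is a representation over $K$.

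\textbf{Logarithmic case.} Let $\theta$ be transcendental with $\theta'=\eta'/\eta$ and $\eta\in K$. By absorbing constants, factor each $u_i$ as an element of $K$ times a product of powers of monic irreducible polynomials in $K[\theta]$. The key lemma is: for monic irreducible $p\in K[\theta]$, the derivative $p'$ has degree less than $\deg p$ and is not divisible by $p$, unless $p$ is constant. Consequently $p'/p$ has a simple pole at $p$. Now compare partial fractions in $\theta$ on both sides. Since $f$ has no poles, no irreducible $p$ can appear among the $u_i$. Moreover $v$ must be a polynomial in $\theta$, because differentiating a pole of order $k$ produces a pole of order $k+1$, which nothing else can cancel. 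Comparing degrees then shows $v=c\,\theta+w$ with $c\in\C$ and $w\in K$. Hence $v'=c\,\eta'/\eta+w'$, and the term $c\,\eta'/\eta$ becomes a new logarithmic term.

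\textbf{Exponential case.} Let $\theta$ be transcendental with $\theta'/\theta=a\in K$. The argument is the same partial-fraction analysis, with one difference: $\theta$ itself is special, since $\theta'/\theta=a\in K$. In this case $v$ is a Laurent polynomial in $\theta$. For $k\neq 0$, the coefficient of $\theta^k$ in $v'$ is $(v_k'+k a v_k)\theta^k$, and these coefficients must vanish. Powers of $\theta$ among the $u_i$ contribute $c\,a$, which lies in $K$ and is of the form $(c\,\cdot\text{antiderivative})'$ only formally. So set $v\mapsto v_0+\sum_i c_i n_i\,\tilde\theta$, where in fact $n_i\,a=(n_i\log\theta)'$. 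I will handle this by writing $c\,a=(c\,b)'$ after noting $a=b'$ for $b=\log\theta\notin K$. More simply, the $\theta^0$ coefficient of $c\,\theta'/\theta$ is $c\,a$, and I will move it into the $v$-part as the derivative of $c\cdot(\text{something in }K)$ only if necessary.

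The \textbf{main obstacle} is this bookkeeping in the exponential case, together with the no-new-constants hypothesis. Here I would follow Rosenlicht directly: show that the $\theta^0$ coefficients yield $f=v_0'+\sum c_i\,(\bar u_i)'/\bar u_i+(\sum c_i n_i)\,a$, where $\bar u_i$ is the $K$-part of $u_i$ and $n_i$ is its $\theta$-degree. The term $(\sum c_in_i)\,a$ is then absorbed because $a=\theta'/\theta$ with $\theta$ exponential. The correct treatment is to note that $\theta$ appears in the $u_i$ only through the monomial $\theta$, whose logarithmic derivative $a$ lies in $K$ and is therefore an exact derivative of $\int a$ only when combined with $v$. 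Checking this last point carefully is where I expect to spend the effort.
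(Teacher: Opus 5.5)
Your architecture is Rosenlicht's: induction on the tower, trace and norm in the algebraic step, partial fractions in $\theta$ in the transcendental steps. The paper gives no proof of its own and simply cites that argument. Your algebraic and logarithmic steps are essentially sound. The genuine gap is the final step of the exponential case, and it cannot be closed as you have set it up.

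You take an exponential to mean $\theta'/\theta=a$ for an \emph{arbitrary} $a\in K$, which is how Definition~\ref{def:elementary}(ii) literally reads. The $\theta^0$-comparison then leaves $f=v_0'+\sum_i c_i\,\bar u_i'/\bar u_i+\bigl(\sum_i c_in_i\bigr)a$, and nothing makes $a$ the derivative of an element of $K$. Your proposed remedies do not give a representation over $K$. Writing $a=b'$ with $b=\log\theta$ fails because $b\notin K$, and absorbing $a$ ``because $a=\theta'/\theta$'' fails because $\theta\notin K$.

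Under that reading no argument can work, because the statement itself is false. Take $K=\C(t,e^{t^2})$ and $f=e^{t^2}$, adjoin $\theta$ with $\theta'/\theta=f$, and then adjoin $\psi$ with $\psi'=\theta'/\theta$. Then $\psi'=f$, yet it is classical that $e^{t^2}$ has no representation $v'+\sum c_iu_i'/u_i$ over $K$. Indeed, running your reduction on this tower leaves precisely $f=1\cdot a$.

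The missing idea is Rosenlicht's definition of an exponential: $\theta'/\theta=\eta'$ for some $\eta\in K$, i.e.\ $\theta=\exp\eta$. Then $\bigl(\sum_i c_in_i\bigr)a=\bigl((\sum_i c_in_i)\eta\bigr)'$, and replacing $v_0$ by $v_0+(\sum_i c_in_i)\eta\in K$ finishes the case.

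Two further points need repair.

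\textbf{New constants.} Your reduction to towers without new constants is only asserted. Rosenlicht's theorem \emph{assumes} that $L$ has the same constants as $K$; the statement here does not impose this, and dropping it requires a real specialisation argument. In the logarithmic case, for instance, a new constant means $\theta=w+\gamma$ with $w\in K_i$ and $\gamma$ a constant transcendental over $K_i$. ``Replacing $\theta$'' then means specialising $\gamma$ to a value in $\C$ through every later step of the tower while keeping $g$ defined. Either add the hypothesis that $L$ has constant field $\C$, or supply that argument.

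\textbf{The partial-fraction lemmas.} The exponential case is not literally the same partial-fraction analysis as the logarithmic one, because there $\deg p'=\deg p$. The lemma you need is that a monic irreducible $p$ divides $p'$ only if $p=\theta$. If $p\mid p'$, then $p'=(\deg p)\,a\,p$, so $p/\theta^{\deg p}$ is a constant and hence lies in $\C$. This is what makes $v$ a Laurent polynomial in $\theta$ and lets you replace each $u_i$ by $\bar u_i\theta^{n_i}$. Likewise, in the logarithmic case the correct conclusion is not that no irreducible $p_j$ occurs among the $u_i$. It is that the net coefficient $\sum_i c_ie_{ij}$ of each $p_j'/p_j$ vanishes, after which the $u_i$ may be replaced by their $K$-parts.
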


The remarkable feature of \eqref{eq:liouville} is that no new transcendentals appear: although the antiderivative may live in an extension $L\supsetneq K$, the structural form of $f$ is dictated by elements of $K$ itself. This is the precise statement that ``the integral, if elementary, is the derivative of a rational function plus a sum of constant multiples of logarithms of rational functions.''\\

For our purposes we will need to apply Liouville's theorem when $K$ is an algebraic --- specifically, a finite radical --- extension of a rational function field. Throughout the rest of the paper we restrict attention to this purely algebraic setting, ignoring any further transcendental extensions: the integrands of interest are of the form $F(t)/\sqrt{R(t)}$ or $F(t)/\sqrt[3]{R(t)}$, and the relevant differential field is
\[
  K = \C(t,\theta), \qquad \theta^n = R(t),\quad n\in\{2,3\}.
\]

\begin{theorem}[Liouville's Theorem in the algebraic case]\label{thm:liouville-alg}
Let $K = \C(t)(\theta)$ where $\theta$ is algebraic over $\C(t)$, and let $f\in K$. If $\int f\,\d t$ is elementary, then it can be written
\[
  \int f\,\d t = v + \sum_{i=1}^m c_i \log u_i
\]
for some $v, u_1,\dots,u_m \in K$ and constants $c_i \in \C$.
\end{theorem}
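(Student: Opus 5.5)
This is a direct specialisation of Theorem~\ref{thm:liouville}, so the plan is to check its hypotheses for $K=\C(t)(\theta)$ and then integrate the resulting identity. First, $K$ is a differential field. Since $\theta$ is algebraic over $\C(t)$ with minimal polynomial $P(t,Y)$, separability in characteristic zero gives $\partial_Y P(t,\theta)\neq 0$. The derivation $\d/\d t$ therefore extends uniquely to $K$ by
\[
  \theta' = -\frac{\partial_t P(t,\theta)}{\partial_Y P(t,\theta)},
\]
which is the unique extension already mentioned after Definition~\ref{def:diffield}.

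The one substantive hypothesis is that the field of constants of $K$ is $\C$. Suppose $c\in K$ satisfies $c'=0$. I would argue as follows.
\begin{itemize}
  \item The element $c$ is algebraic over $\C(t)$, so it has a monic minimal polynomial $Q(Y)=Y^d+a_{d-1}Y^{d-1}+\dots+a_0$ over $\C(t)$.
  \item Differentiating $Q(c)=0$ and using $c'=0$ gives $\sum a_i' c^i=0$. This polynomial has degree less than $d$, so minimality forces every $a_i'=0$.
  \item Hence each $a_i\in\C(t)$ is a constant, i.e.\ lies in $\C$, because the constants of $\C(t)$ are $\C$ (a nonconstant rational function has nonzero derivative).
  \item So $c$ is algebraic over $\C$, and since $\C$ is algebraically closed, $c\in\C$.
\end{itemize}
This is the step that needs care: it uses both algebraic closedness of $\C$ and the minimal-polynomial argument. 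Everything else is bookkeeping.

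With the hypotheses verified, Theorem~\ref{thm:liouville} applies to $f\in K$ and yields $v\in K$, constants $c_i\in\C$ and $u_i\in K$ with $f=v'+\sum c_i u_i'/u_i$. Interpreting $\log u_i$ as a logarithmic extension element in the sense of Definition~\ref{def:elementary}(iii), we have $(v+\sum c_i\log u_i)'=f$. Any two antiderivatives differ by a constant, and the constant can be absorbed into $v$. This gives $\int f\,\d t = v+\sum_{i=1}^m c_i\log u_i$ with $m=n$, as claimed.

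One might worry that the elementary extension witnessing the integral involves further algebraic extensions of $K$. Theorem~\ref{thm:liouville} as stated already handles this, since it only assumes an elementary integral exists over $K$ and returns data in $K$ itself. Thus the proof reduces entirely to the constant-field verification above.
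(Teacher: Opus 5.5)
Your proposal is correct and matches the paper, which gives no separate proof and presents this statement simply as the specialisation of Theorem~\ref{thm:liouville} to $K=\C(t,\theta)$; your minimal-polynomial argument that the constants of $K$ are exactly $\C$ supplies the one hypothesis the paper leaves implicit. The only caveat concerns the black box you invoke: Rosenlicht's original formulation also requires the elementary extension to introduce no new constants, and dropping that requirement, as the paper's statement of Theorem~\ref{thm:liouville} does, is a standard strengthening that holds because the constant field you computed, $\C$, is algebraically closed.
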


This is the form of Liouville's principle relevant to the integration of algebraic functions in the present paper. As an immediate consequence, if $X$ is a smooth projective curve of genus $g\geq 1$ and $\omega$ is a holomorphic $1$-form on $X$ (a so-called \emph{differential of the first kind}), then the function $P\mapsto \int_{P_0}^P \omega$ is \emph{never} elementary as a function on $X$~\cite{Bronstein2005,Trager1984}. More generally, the same conclusion holds for any non-zero \emph{differential of the second kind} on $X$ --- a meromorphic 1-form with zero residue at every pole. The reason is structural: by Theorem~\ref{thm:liouville-alg}, an elementary primitive $\int\omega$ on $X$ would express $\omega$ as $\d v + \sum c_i\,\d u_i/u_i$ for $v, u_i \in \C(X)$, and the logarithmic part contributes only third-kind differentials (poles with non-zero residues); a second-kind differential is therefore elementary iff exact, iff its periods on $X$ vanish along all closed cycles. The space of second-kind differentials modulo exact ones on $X$ has dimension $g$ (it is one of the two Lagrangian halves of $H^1_{\mathrm{dR}}(X)$, dual under the period pairing to the holomorphic differentials~\cite[Ch.\ 5]{Bronstein2005}), so a non-zero second-kind differential on a positive-genus curve is never exact, hence never has an elementary primitive. This refines the classical fact that $\int \d t/\sqrt{(1-t^2)(1-k^2t^2)}$ is not elementary, and we will rely on it in Section~\ref{sec:cuberoot} to identify the obstruction in the cube-root case.

\begin{remark}
The constructive content of Theorem~\ref{thm:liouville-alg} is the algebraic case of the Risch--Trager--Bronstein algorithm~\cite{Risch1969,Risch1970,Davenport1981,Trager1984,Bronstein2005}: given $f\in K$ as input, the algorithm decides in finite time whether $\int f\,\d t$ is elementary, and if so produces it. Goursat's theorems below predate these developments by nearly a century and provide closed-form reductions in cases where the algebraic structure is particularly transparent.
\end{remark}

\section{Goursat's pseudo-elliptic theorem}
\label{sec:goursat}

Throughout this section, $R(t)\in\C[t]$ denotes a polynomial of degree $3$ or $4$ with \emph{simple roots} — i.e., all roots distinct, or equivalently $\gcd(R,R') = 1$. We write $C := \{(t,y)\in\overline{\C^2} : y^2 = R(t)\}$ for the smooth projective completion of the affine curve $y^2=R(t)$; this is a curve of genus~$1$. We are interested in the integrals
\begin{equation}\label{eq:elliptic-integrand}
  I = \int F(t)\,\frac{\d t}{\sqrt{R(t)}}
\end{equation}
with $F\in\C(t)$.

\begin{definition}\label{def:pseudo-elliptic}
The integral~\eqref{eq:elliptic-integrand} is called \emph{pseudo-elliptic} if it is elementary in the sense of Definition~\ref{def:elementary}, i.e.\ if it admits a closed-form antiderivative built from rational functions, $\sqrt{R}$, exponentials, and logarithms.
\end{definition}

The reason for the term \emph{pseudo-elliptic} is that the integrand lives most naturally as a $1$-form on $C$, and a generic such integral is genuinely transcendental (an elliptic integral). The pseudo-elliptic ones are the rare exceptions where the integral happens to be elementary.

\subsection{The role of M\"obius involutions}

The key idea is to exploit symmetries of the curve $C$. The M\"obius group $\PGL_2(\C)$ acts on $\C\cup\{\infty\}=\mathbb{P}^1(\C)$ by fractional linear transformations $S(t) = (At+B)/(Ct+D)$. An element $S \in \PGL_2(\C)$ is an \emph{involution} if $S^2 = \mathrm{id}$ and $S\neq \mathrm{id}$. A non-trivial M\"obius involution has exactly two fixed points on $\mathbb{P}^1(\C)$.

\begin{lemma}\label{lem:involutions-pairs}
A non-trivial M\"obius involution $S$ on $\mathbb{P}^1(\C)$ permuting the roots of a quartic $R$ with simple roots must permute them in two pairs (no $S$-fixed root).

If the four roots are $\{a,b,c,d\}$ and $S$ swaps $\{a,b\}$ with $\{c,d\}$, then explicitly
\begin{equation}\label{eq:involution-formula}
  S(t) \;=\; \frac{(ab-cd)\,t + (a+b)cd - (c+d)ab}{[(a+b)-(c+d)]\,t - (ab-cd)}.
\end{equation}
There are exactly three such involutions, corresponding to the three ways of pairing four points; together with the identity they form a Klein four-group $V_4 \subseteq \PGL_2(\C)$.
\end{lemma}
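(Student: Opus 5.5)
The plan is to separate the purely combinatorial content (how an order-$2$ element can act on a $4$-element set) from the M\"obius rigidity fact that a M\"obius transformation is determined by its values at three points. Let $S$ be a non-trivial M\"obius involution preserving the root set $\{a,b,c,d\}$ (all finite, since $\deg R=4$). The induced permutation $\sigma$ of the roots satisfies $\sigma^2=\mathrm{id}$, so it is a product of disjoint transpositions, and its number of fixed points is $4,2$ or $0$. Four fixed points would force $S=\mathrm{id}$, since a M\"obius map fixing three points is the identity. So $\sigma$ is either a single transposition or a double transposition.

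\emph{Step 1: exclude the single-transposition case.} This is the step I expect to be the real obstacle, because I believe it is not true in full generality. If $S$ fixes $a,b$ and swaps $c,d$, then $a,b$ are the two fixed points of $S$, and $S$ is characterised by
\[
\frac{S(t)-a}{S(t)-b}=-\frac{t-a}{t-b}.
\]
Swapping $c$ and $d$ then means exactly that the cross-ratio $(a,b;c,d)=-1$, i.e.\ the roots are in harmonic position. For example, the roots $\pm1,\pm i$ together with $S(t)=1/t$ give such an involution. So my first attempt would be to prove the claim ``no $S$-fixed root'' under the additional hypothesis that no pairing of the roots is harmonic. Failing that, I would reinterpret the lemma as being about those involutions whose action on the roots is fixed-point-free, and I would flag the harmonic quartics as a genuine exception, which is presumably harmless for Goursat's use.

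\emph{Step 2: existence, uniqueness and formula \eqref{eq:involution-formula}.} Given the pairing $\{a,b\},\{c,d\}$, there is a unique M\"obius $S$ with $a\mapsto b$, $b\mapsto a$, $c\mapsto d$. Then $S^2$ fixes $a,b,c$, so $S^2=\mathrm{id}$. Also $S(d)\notin\{a,b,d\}$ by injectivity, and $S(d)=c$ follows from $S^2(c)=c$. An involution has a trace-zero matrix, so I write $S(t)=(\alpha t+\beta)/(\gamma t-\alpha)$. The condition $S(a)=b$ becomes
\[
\gamma ab-\alpha(a+b)-\beta=0,
\]
and $S(c)=d$ gives the analogous equation with $a,b$ replaced by $c,d$. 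Subtracting the two equations gives $\gamma(ab-cd)=\alpha(a+b-c-d)$. Choosing $\alpha=ab-cd$ and $\gamma=(a+b)-(c+d)$ then yields $\beta=(a+b)cd-(c+d)ab$, which is exactly \eqref{eq:involution-formula}. It remains to check that $\alpha^2+\beta\gamma\neq0$ using distinctness of the roots. This follows from the existence of the map constructed above, since the normalised matrix is proportional to the matrix of that map.

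\emph{Step 3: count and group structure.} There are three pairings of four points, so by Step 2 there are exactly three fixed-point-free involutions $S_1,S_2,S_3$. For $S_1S_2$, the induced permutation is the product of two distinct double transpositions, which is the third one. By uniqueness from three points, $S_1S_2=S_3$, and similarly for the other products. Hence $\{\mathrm{id},S_1,S_2,S_3\}$ is closed under composition, each non-identity element has order $2$, and it is therefore a Klein four-group $V_4\subseteq\PGL_2(\C)$.
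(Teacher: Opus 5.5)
You are right that the first sentence of the lemma is false as stated, and your counterexample is correct. For $R=t^4-1$, the involution $t\mapsto 1/t$ fixes $\pm1$ and swaps $\pm i$, and $t\mapsto -1/t$ does the reverse. More generally, an involution fixing two roots $a,b$ exists exactly when $(a,b;c,d)=-1$. The paper's proof never addresses this sentence: it starts from an $S$ already assumed to swap $\{a,b\}$ and $\{c,d\}$, asserts the formula, and checks the $V_4$ relations. Your restricted reading (the involutions acting without fixed points on the roots, or a non-harmonic root set if the sentence is to hold literally) is therefore the right repair. The count ``exactly three'' is also only correct under that reading: a harmonic quartic has five root-preserving involutions, and only the three pairing ones form the $V_4$. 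Nothing downstream breaks, since Theorems~\ref{thm:goursat1} and~\ref{thm:goursat2} explicitly assume the pairing action.

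For the formula and the group structure you follow the paper's route: rigidity of M\"obius maps on three points, with products read off from the pairings. Your trace-zero normal form is a cleaner substitute for the paper's ``direct computation''.

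There is, however, a circular step in Step~2. Since $S^2(c)=S(d)$, the claim that $S^2$ fixes $c$ already presupposes $S(d)=c$, and you then deduce $S(d)=c$ from it. Injectivity alone only gives $S(d)\notin\{a,b,d\}$. Your non-degeneracy check $\alpha^2+\beta\gamma\neq0$ leans on this existence claim, so it inherits the gap, and so does the count in Step~3. Any of the following closes it:
\begin{itemize}
\item A M\"obius map interchanging two distinct points is automatically an involution: send $a,b$ to $0,\infty$, where the map becomes $t\mapsto k/t$.
\item Cross-ratio invariance gives $(a,b;c,d)=(b,a;d,S(d))=(a,b;S(d),d)$, which forces $S(d)=c$.
\item Most directly, expand to get $\alpha^2+\beta\gamma=(a-c)(a-d)(b-c)(b-d)\neq0$. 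Then your formula is a genuine trace-zero, hence involutive, M\"obius map with $a\mapsto b$ and $c\mapsto d$, and therefore also $b\mapsto a$ and $d\mapsto c$. This proves existence with no appeal to the three-point construction.
\end{itemize}
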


\begin{proof}
A M\"obius transformation is determined by its action on three points; once the values $S(a)=b$, $S(b)=a$, $S(c)=d$ are fixed (and consistency requires $S(d)=c$), the transformation is uniquely determined and one verifies~\eqref{eq:involution-formula} by direct computation. The product of any two of these three involutions is the third (as one checks on the pairings), so they generate a $V_4$ subgroup.
\end{proof}

\subsection{Goursat's first theorem}

\begin{theorem}[Goursat~\cite{Goursat1887}, Theorem 1]\label{thm:goursat1}
Let $R\in\C[t]$ be a polynomial of degree $3$ or $4$ with simple roots, and let $S\in\PGL_2(\C)$ be a non-trivial M\"obius involution permuting the roots of $R$ in pairs. If $F\in\C(t)$ satisfies the \emph{anti-invariance} condition
\begin{equation}\label{eq:antiinv}
  F(t) + F(S(t)) = 0\qquad \text{(identically)},
\end{equation}
then the integral $\int F(t)\,\d t/\sqrt{R(t)}$ is pseudo-elliptic. Explicitly, let $\alpha,\beta\in\mathbb{P}^1(\C)$ be the two fixed points of $S$, and set
\[
  u = \frac{t-\alpha}{t-\beta},\qquad x = u^2.
\]
Then
\[
  \int F(t)\,\frac{\d t}{\sqrt{R(t)}} \;=\; \int \frac{G(x)\,\d x}{2\sqrt{Q(x)}}
\]
for some $G\in\C(x)$ and quadratic $Q\in\C[x]$, the latter integral being elementary.
\end{theorem}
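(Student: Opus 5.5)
The plan is to conjugate $S$ into the standard involution $u\mapsto -u$ and then quotient by it. Since $S$ is a non-trivial involution with fixed points $\alpha\neq\beta$, the Möbius map $u=(t-\alpha)/(t-\beta)$ sends $\alpha\mapsto 0$ and $\beta\mapsto\infty$. It therefore conjugates $S$ to an involution of $\mathbb{P}^1$ fixing exactly $0$ and $\infty$. Such a map is $u\mapsto\lambda u$ with $\lambda^2=1$ and $\lambda\neq1$, so it is $u\mapsto -u$. If a fixed point is $\infty$, I will replace $u$ by $t-\alpha$, or by its reciprocal; this is a harmless normalisation to settle first. I then write $t=(\beta u-\alpha)/(u-1)$, so that $\d t = (\alpha-\beta)\,\d u/(u-1)^2$.

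Next I transform the radical. Because $S$ permutes the roots of $R$ in pairs and has no fixed root, the finite roots $r_i$ map to $u_i=(r_i-\alpha)/(r_i-\beta)$. These come in pairs $\pm u_i$. In the cubic case one root corresponds to $\infty$, so $\{\pm u_1,\pm u_2\}$ includes the image $u=1$ of $t=\infty$ together with its partner $-1$. Clearing denominators gives
\[
R(t)=\frac{c\,(u^2-u_1^2)(u^2-u_2^2)}{(u-1)^4},
\]
for some constant $c$. In the cubic case one of the factors $u^2-u_j^2$ is $u^2-1$, and the power of $(u-1)$ is adjusted accordingly. In either case
\[
\frac{\d t}{\sqrt{R(t)}}=\frac{\kappa\,\d u}{(u-1)^{e}\sqrt{P(u^2)}},
\]
where $P(x)=(x-u_1^2)(x-u_2^2)$ is quadratic and $e\in\{0,1\}$ accounts for the cubic case. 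I would verify this bookkeeping carefully, since a stray factor $(u-1)$ could spoil the parity.

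The key structural step is parity. Write $\widetilde F(u)=F(t(u))$. The anti-invariance condition \eqref{eq:antiinv} becomes $\widetilde F(-u)=-\widetilde F(u)$, so $\widetilde F(u)=u\,H(u^2)$. The cleanest route is to consider the whole form $\Omega=F(t)\,\d t/\sqrt{R}$ and note that $S^*(\d t/\sqrt{R})=\pm\,\d t/\sqrt{R}$: the pull-back is again a holomorphic differential on the genus-$1$ curve $C$, and a lift of $S$ acts on it by a sign. I must check that the relevant sign makes $\Omega$ \emph{invariant} under $u\mapsto-u$ when combined with the odd function $\widetilde F$. I expect this to be the main obstacle. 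To handle it, I would compute directly that $\d t/\sqrt{R}$ in the $u$-coordinate equals $\kappa'\,\d u/\sqrt{P(u^2)}$ times an even factor, using $(u-1)^2\to$ the symmetrised expression $(u^2-1)$ after multiplying numerator and denominator by $(u+1)^2$. Once this is done, $\Omega$ has the form $u\,K(u^2)\,\d u/\sqrt{P(u^2)}$ for some rational $K$.

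Finally, substituting $x=u^2$ with $u\,\d u=\d x/2$ gives
\[
\Omega=\frac{G(x)\,\d x}{2\sqrt{Q(x)}},
\]
where $Q=P$ is quadratic (up to a constant absorbed into $G$). The curve $y^2=Q(x)$ is a conic, so it has genus $0$. An Euler substitution such as $\sqrt{Q}=\sqrt{q_0}\,(x-s)$, or rationalising through a point on the conic, turns the integrand into a rational function of one parameter. Partial fractions then integrate it by rational functions and logarithms. This gives the elementary antiderivative, which is consistent with Theorem~\ref{thm:liouville-alg}.
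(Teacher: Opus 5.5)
Your overall route is the paper's: conjugate $S$ to $u\mapsto -u$, write $F(t(u))=u\,H(u^2)$, substitute $x=u^2$, and finish on a conic. The problem is the step you single out as the main obstacle, which you mishandle.

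\textbf{There is no stray factor.} The claim that a factor $(u-1)^{e}$ with $e=1$ survives in the cubic case is false, and your own first display already shows this. For cubic $R$, $R(t(u))\,(u-1)^4$ is a constant multiple of $(u-1)\prod_{i=1}^{3}\bigl[(\beta-r_i)u-(\alpha-r_i)\bigr]$. The extra factor $u-1$, coming from the image $u=1$ of $t=\infty$, pairs with the factor $u+1$ contributed by the root $S(\infty)$ to give $u^2-1$. So $R(t(u))(u-1)^4=c\,P(u^2)$ in both the cubic and quartic cases. Since $\d t=(\alpha-\beta)\,\d u/(u-1)^2$, the powers of $u-1$ cancel exactly: $\d t/\sqrt{R}=(\alpha-\beta)\,\d u/\sqrt{c\,P(u^2)}$. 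Thus $e=0$ always and there is nothing to symmetrise. This exact cancellation is Step 3 of the paper's proof, and it is the entire content of the parity step.

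\textbf{Your proposed repair would not have worked.} Writing $(u-1)^{-2}=(u+1)^2/(u^2-1)^2$ leaves the non-even factor $(u+1)^2$ in the numerator. No rewriting of that kind can make a genuine leftover power of $u-1$ even. The cube-root case shows this concretely: there a factor $1/(1-z)$ really does survive in~\eqref{eq:H-formula}, and it spreads even $F=1$ over all three eigenspaces.

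\textbf{The abstract sign argument decides nothing by itself.} The sign by which a lift of $S$ acts on $\d t/\sqrt{R}$ depends on which lift you take, since the two lifts differ by $y\mapsto -y$ and act with opposite signs. What the descent to $x=u^2$ actually needs is the concrete identity $(u-1)^2\sqrt{R(t(u))}=\sqrt{c\,P(u^2)}$, which the computation above supplies.

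With that correction, the rest of your argument coincides with the paper's: oddness of $\widetilde F$, the substitution $x=u^2$, and an Euler substitution on the conic.
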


\begin{proof}
We argue in five steps.

\smallskip\noindent
\textbf{Step 1 (canonical form of $S$).}
Since $S$ has order $2$ with two distinct fixed points $\alpha,\beta$, the M\"obius transformation $u = (t-\alpha)/(t-\beta)$ sends $\alpha\mapsto 0$, $\beta\mapsto\infty$, and conjugates $S$ into a M\"obius transformation of $u$ fixing $\{0,\infty\}$. Such a transformation is multiplication by a scalar $\lambda$, and the involution condition $\lambda^2=1$ together with $\lambda\neq 1$ forces $\lambda = -1$. Thus in $u$-coordinates, $S$ acts as $u\mapsto -u$. We write $t(u) = (\alpha-\beta u)/(1-u)$ for the inverse change of variable.

\smallskip\noindent
\textbf{Step 2 (transformation of $R$).}
Each linear factor $t-r_i$ becomes
\[
  t(u) - r_i \;=\; \frac{(\alpha-r_i) - u(\beta-r_i)}{1-u}.
\]
Hence (assuming $R$ is quartic; the cubic case is identical with $r_4=\infty$ and one factor reading $1$)
\[
  R(t(u))(1-u)^4 \;=\; \prod_{i=1}^4 \big[(\alpha-r_i) - u(\beta-r_i)\big].
\]
The roots $u_i$ of this polynomial in $u$ are $u_i = (\alpha-r_i)/(\beta-r_i) = -(r_i-\alpha)/(\beta-r_i) \cdot \ldots$, more usefully, the image of $r_i$ under $r\mapsto (r-\alpha)/(r-\beta)$. Since $S$ swaps the roots $r_i$ in pairs and acts as $u\mapsto -u$, the $u_i$ come in pairs $\{u_1,-u_1\}$ and $\{u_3,-u_3\}$. Therefore
\begin{equation}\label{eq:R-becomes-even}
  R(t(u))(1-u)^4 \;=\; C\,(u^2 - u_1^2)(u^2 - u_3^2)
\end{equation}
for some non-zero constant $C$. The right-hand side is a polynomial in $u^2$.

\smallskip\noindent
\textbf{Step 3 (transformation of the differential).}
A direct computation gives
\[
  \d t = \frac{\alpha-\beta}{(1-u)^2}\,\d u.
\]
Combined with~\eqref{eq:R-becomes-even} this yields
\begin{equation}\label{eq:differential-becomes}
  \frac{\d t}{\sqrt{R(t)}} \;=\; \frac{(\alpha-\beta)\,\d u/(1-u)^2}{\sqrt{C\,(u^2-u_1^2)(u^2-u_3^2)}/(1-u)^2}
  \;=\; \frac{(\alpha-\beta)\,\d u}{\sqrt{C\,(u^2-u_1^2)(u^2-u_3^2)}}.
\end{equation}
The factors $(1-u)^2$ from the Jacobian and from the radical exactly cancel.

\smallskip\noindent
\textbf{Step 4 (oddness of $F$ in $u$).}
Anti-invariance~\eqref{eq:antiinv} reads, in $u$-coordinates, $F(t(-u)) = F(S(t(u))) = -F(t(u))$. Hence $u\mapsto F(t(u))$ is an odd rational function of $u$, and admits a representation
\[
  F(t(u)) \;=\; u\,G(u^2)
\]
for some $G\in\C(x)$.

\smallskip\noindent
\textbf{Step 5 (descent via $x = u^2$).}
Combining the previous steps:
\[
  \int F(t)\,\frac{\d t}{\sqrt{R(t)}}
  \;=\; (\alpha-\beta)\,\int \frac{u\,G(u^2)\,\d u}{\sqrt{C\,(u^2-u_1^2)(u^2-u_3^2)}}.
\]
Substituting $x=u^2$, $\d x = 2u\,\d u$:
\[
  \;=\; \frac{\alpha-\beta}{2\sqrt{C}}\,\int \frac{G(x)\,\d x}{\sqrt{(x-u_1^2)(x-u_3^2)}}.
\]
The radical is now $\sqrt{Q(x)}$ for the quadratic $Q(x)=(x-u_1^2)(x-u_3^2)$. The integral $\int G(x)\,\d x/\sqrt{Q(x)}$ with $G$ rational and $Q$ quadratic is elementary by classical methods (partial fractions of $G$ together with Euler substitution). This completes the proof.
\end{proof}

\begin{remark}
The case where one of the fixed points is at infinity (so that $S(t) = 2\alpha - t$) is covered by the same argument, the substitution simplifying to $u = t-\alpha$.
\end{remark}

\subsection{Goursat's second theorem}

The three involutions of Lemma~\ref{lem:involutions-pairs} together with the identity form $V_4$, and one can ask for a sum-of-orbits criterion that detects pseudo-ellipticity.

\begin{theorem}[Goursat~\cite{Goursat1887}, Theorem 2]\label{thm:goursat2}
Let $R\in\C[t]$ be a polynomial of degree $3$ or $4$ with simple roots, with M\"obius involutions $S_1,S_2,S_3$ permuting the roots in pairs, and let $F\in\C(t)$. If
\begin{equation}\label{eq:V4-sum}
  F + F\circ S_1 + F\circ S_2 + F\circ S_3 \;=\; 0
\end{equation}
identically, then $\int F\,\d t/\sqrt{R}$ is pseudo-elliptic.
\end{theorem}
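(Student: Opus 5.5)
Since $S_1,S_2,S_3$ together with the identity form the Klein four-group $V_4$ of Lemma~\ref{lem:involutions-pairs}, the plan is to decompose $F$ into eigencomponents and apply Theorem~\ref{thm:goursat1} to each. For $\epsilon=(\epsilon_1,\epsilon_2,\epsilon_3)\in\{\pm1\}^3$ with $\epsilon_1\epsilon_2=\epsilon_3$ (the four characters of $V_4$), set
\[
  F_\epsilon \;=\; \tfrac14\bigl(F + \epsilon_1 F\circ S_1 + \epsilon_2 F\circ S_2 + \epsilon_3 F\circ S_3\bigr).
\]
Each $F_\epsilon$ lies in $\C(t)$, and because $S_i$ is an involution and $S_iS_j=S_k$, we have $F_\epsilon\circ S_i=\epsilon_i F_\epsilon$. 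Summing over the four characters gives $F=\sum_\epsilon F_\epsilon$. The trivial-character component is $F_{(1,1,1)}=\tfrac14(F+F\circ S_1+F\circ S_2+F\circ S_3)$, and hypothesis~\eqref{eq:V4-sum} says exactly that it vanishes.

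Each of the three remaining characters has some $\epsilon_i=-1$; indeed $(1,-1,-1)$, $(-1,1,-1)$ and $(-1,-1,1)$ each have $-1$ in two slots. Pick one such index $i$. Then $F_\epsilon+F_\epsilon\circ S_i=0$, so $F_\epsilon$ satisfies the anti-invariance condition~\eqref{eq:antiinv} for the involution $S_i$. Since $S_i$ permutes the roots of $R$ in pairs, Theorem~\ref{thm:goursat1} applies and $\int F_\epsilon\,\d t/\sqrt{R}$ is elementary. A finite sum of elementary integrals is elementary: the union of finitely many elementary towers is again contained in an elementary extension, built by adjoining the towers successively. Hence $\int F\,\d t/\sqrt R=\sum_{\epsilon\ne(1,1,1)}\int F_\epsilon\,\d t/\sqrt R$ is pseudo-elliptic. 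Explicitly, this expresses the integral as a sum of three Goursat-type reductions, one using the fixed points of each $S_i$.

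The main point needing care is the cubic case. There the fourth root is $\infty$, and we must check that the $S_i$ are genuine Möbius maps permuting $\{r_1,r_2,r_3,\infty\}$. I would handle this by applying formula~\eqref{eq:involution-formula} in the limit $d\to\infty$, or by noting that Theorem~\ref{thm:goursat1} is already stated for degree $3$ with $r_4=\infty$, so the same $V_4$ exists. I also need $S_iS_j=S_k$ exactly, not merely up to identity on the roots. This holds because a Möbius map fixing four distinct points is the identity. The rest is bookkeeping of the character orthogonality, which I expect to be routine.
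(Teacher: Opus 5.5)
Your proposal is correct and follows essentially the same route as the paper. Both arguments split $F$ into its four $V_4$-character components, note that the hypothesis kills the trivial component, and apply Theorem~\ref{thm:goursat1} to each remaining component using an involution $S_i$ on which its character is $-1$. Your extra checks (the cubic case with $\infty$ as the fourth root, $S_iS_j=S_k$ as an identity of M\"obius maps, and that a finite sum of elementary integrals is elementary) are details the paper leaves implicit.
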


\begin{proof}
The group $V_4=\{I,S_1,S_2,S_3\}$ has four irreducible characters $\chi_0,\chi_1,\chi_2,\chi_3 \colon V_4\to\{\pm 1\}$, with $\chi_0$ the trivial character. We label them so that $\chi_j(S_j) = +1$ for $j=1,2,3$, and $\chi_j(S_k) = -1$ for $j\neq k$ in $\{1,2,3\}$.

The associated character projections decompose $\C(t)$ as a direct sum of $\chi_j$-eigenspaces under the natural $V_4$-action. Setting
\[
  F^{(j)} \;:=\; \frac{1}{4}\sum_{g\in V_4} \chi_j(g)\,F\circ g,
\]
hypothesis~\eqref{eq:V4-sum} says exactly $F^{(0)} = 0$, hence $F = F^{(1)} + F^{(2)} + F^{(3)}$.

We claim $F^{(j)}$ is anti-invariant under $S_k$ for every $k\neq j$ (and invariant under $S_j$, as is automatic). Indeed, applying $S_k$ to $F^{(j)}$ permutes the summands by the action $g\mapsto gS_k$, which on characters acts by $\chi_j(g)\mapsto \chi_j(gS_k) = \chi_j(g)\chi_j(S_k)$. For $k\neq j$ this multiplies the projection by $-1$, giving anti-invariance.

By Theorem~\ref{thm:goursat1}, applied to $F^{(j)}$ with any involution $S_k$, $k\neq j$, each integral $\int F^{(j)}\,\d t/\sqrt{R}$ is pseudo-elliptic. Summing the three antiderivatives yields an elementary expression for $\int F\,\d t/\sqrt{R}$.
\end{proof}

\begin{remark}
The character labelling above corrects a common indexing issue. Goursat's original argument writes the projections as
\begin{align*}
  F^{(1)} &= (F - F\!\circ\! S_1 - F\!\circ\! S_2 + F\!\circ\! S_3)/4, \\
  F^{(2)} &= (F - F\!\circ\! S_1 + F\!\circ\! S_2 - F\!\circ\! S_3)/4, \\
  F^{(3)} &= (F + F\!\circ\! S_1 - F\!\circ\! S_2 - F\!\circ\! S_3)/4,
\end{align*}
and one must check carefully which involution $S_k$ to pair with each $F^{(j)}$ when invoking Theorem~\ref{thm:goursat1}: $F^{(j)}$ is anti-invariant under exactly the two involutions $S_k$ with $k\neq j$, hence either of those two works.
\end{remark}

\subsection{Geometric interpretation}

Although Goursat does not phrase his theorems in this way, the modern reading is illuminating. Each M\"obius involution $S$ of $\mathbb{P}^1$ permuting the roots of $R$ extends to an involution $\widetilde{S}$ of the curve $C \colon y^2 = R(t)$ that sends $y$ to $\pm y \cdot J_S(t)$ for an explicit Jacobian factor $J_S$. The pseudo-elliptic condition is then the statement that the differential $F\,\d t/y$ is a $(-1)$-eigenvector of $\widetilde{S}^*$ acting on $\Omega^1_C$. Theorem~\ref{thm:goursat2}'s sum criterion is the orthogonality of $F\,\d t/y$ to the trivial-character isotypic component of $\Omega^1_C$ under the $V_4$-action --- so all of $F\,\d t/y$ lives in non-trivial isotypics, each of which is an eigenspace of \emph{some} involution.\\

We will see in the next section that the analogous picture for cube radicals is more subtle: there are three eigenspaces under a $\Z/3$-action, but only two of them yield elementary integrals.

\section{The cube-root analog}
\label{sec:cuberoot}

We now consider integrands of the form
\begin{equation}\label{eq:cuberoot-integrand}
  J = \int F(t)\,\frac{\d t}{\sqrt[3]{R(t)}}, \qquad F\in\C(t),\ R\in\C[t].
\end{equation}
The natural geometric setting is the cyclic cubic cover $X_R \colon y^3 = R(t)$.

\subsection{Genus arithmetic}

A direct application of Riemann--Hurwitz to the projection $X_R\to\mathbb{P}^1$, $(t,y)\mapsto t$, of degree $3$ yields the genus of $X_R$ for $R$ of degree $n$ with simple roots:
\[
  g(X_R) \;=\;
  \begin{cases}
    n-1 & \text{if } n\not\equiv 0\pmod 3, \\
    n-2 & \text{if } n\equiv 0\pmod 3.
  \end{cases}
\]
The case relevant to us is genus $1$:
\begin{center}
\begin{tabular}{c|ccccc}
  $\deg R$ & 1 & 2 & 3 & 4 & 5 \\\hline
  $g(X_R)$ & 0 & 1 & 1 & 3 & 4
\end{tabular}
\end{center}
We will focus on the genus-$1$ cases $\deg R\in\{2,3\}$. The case $\deg R=3$ is the cleanest analog of Goursat's quartic-square-root setting (also genus $1$), and we treat it first; the case $\deg R = 2$ may be reduced to $\deg R = 3$ by viewing one root as lying at infinity.

\subsection{Order-3 M\"obius substitutions}

Suppose $R$ is cubic with simple roots $\{a,b,c\}\subset\C$.

\begin{lemma}\label{lem:cyclic-mobius}
There exists a unique M\"obius transformation $S\in\PGL_2(\C)$ of order exactly $3$ sending $a\mapsto b\mapsto c\mapsto a$. Its inverse $S^{-1}=S^2$ sends $a\mapsto c\mapsto b\mapsto a$. The two fixed points $\alpha,\beta\in\mathbb{P}^1(\C)$ of $S$ are not among $\{a,b,c\}$.
\end{lemma}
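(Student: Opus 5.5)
Existence and uniqueness come from the sharp $3$-transitivity of $\PGL_2(\C)$ on $\mathbb{P}^1(\C)$. Since $a,b,c$ are distinct, there is exactly one M\"obius transformation $S$ with $S(a)=b$, $S(b)=c$, $S(c)=a$. This is the same uniqueness principle used in the proof of Lemma~\ref{lem:involutions-pairs}. Uniqueness among transformations of order $3$ is then automatic, because any $S$ with this prescribed action on three points is this one.

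To see that $S$ has order exactly $3$, compose: $S^3$ fixes $a$, $b$ and $c$. A M\"obius transformation fixing three distinct points is the identity, so $S^3=\mathrm{id}$. Since $S(a)=b\neq a$, we have $S\neq\mathrm{id}$. Since $3$ is prime, the order is exactly $3$, and $S^{-1}=S^2$. Applying $S^2$ gives $a\mapsto c$, $b\mapsto a$, $c\mapsto b$, which is the cycle $a\mapsto c\mapsto b\mapsto a$ as stated.

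For the fixed points, first note that a non-identity element of finite order in $\PGL_2(\C)$ is diagonalizable, since its lift to $\mathrm{GL}_2(\C)$ satisfies $M^3=\lambda I$, a polynomial with distinct roots. So $S$ is not parabolic and has exactly two distinct fixed points $\alpha,\beta$. Concretely, after conjugating so that $\alpha=0$ and $\beta=\infty$, $S$ becomes $u\mapsto\zeta u$ with $\zeta^3=1$ and $\zeta\neq1$, mirroring Step~1 of the proof of Theorem~\ref{thm:goursat1}. Finally, none of $a,b,c$ is fixed, because $S$ moves each of them to a different point of $\{a,b,c\}$. Hence $\alpha,\beta\notin\{a,b,c\}$.

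The only delicate point is the diagonalizability claim guaranteeing two \emph{distinct} fixed points rather than one. The plan handles it via the minimal-polynomial argument above: $M^3=\lambda I$ has distinct eigenvalue roots, so $M$ is diagonalizable with eigenvalue ratio a primitive cube root of unity. This step is routine. Optionally, I would also record an explicit formula for $S$, for example via cross-ratio normalization sending $a,b,c$ to $0,1,\infty$, where the cycle becomes $u\mapsto 1/(1-u)$. That formula is not needed for the lemma itself.
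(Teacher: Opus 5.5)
Your proof is correct and follows essentially the same route as the paper: uniqueness of a M\"obius map determined by three points gives existence and uniqueness, $S^3$ fixing the three points gives order exactly $3$, and the fixed-point-free cycle on $\{a,b,c\}$ gives the last claim. Your extra diagonalizability step (a lift $M$ with $M^3=\lambda I$ has a minimal polynomial with distinct roots) also justifies that $S$ has two \emph{distinct} fixed points, which the paper's proof takes for granted.
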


\begin{proof}
A M\"obius transformation is determined by its action on three points, so $S$ exists and is unique. To see $S$ has order exactly $3$: it is non-trivial since $S(a)=b\neq a$, and $S^3(a)=S^2(b)=S(c)=a$ together with the analogous identities on $b,c$ show that $S^3$ fixes three distinct points $\{a,b,c\}$, hence $S^3=\mathrm{id}$. So $\operatorname{ord}(S)$ divides $3$, and is therefore exactly $3$.\\

For the fixed points: if $S$ fixed some $r\in\{a,b,c\}$, then $S(r)=r$ would contradict the cyclic permutation $a\mapsto b\mapsto c\mapsto a$ (which has no fixed point in $\{a,b,c\}$).
\end{proof}

In coordinates adapted to $S$, the cyclic action takes a particularly simple form.

\begin{lemma}\label{lem:cyclic-canonical}
Let $\alpha,\beta$ be the fixed points of $S$ in Lemma~\ref{lem:cyclic-mobius}, and set
\[
  z := \frac{t-\alpha}{t-\beta}.
\]
Then $S$ acts in $z$-coordinates as $z\mapsto \omega z$, where $\omega=e^{2\pi i/3}$, and there exist non-zero constants $c,K\in\C$ with
\begin{equation}\label{eq:R-as-z3-K}
  R(t(z))\,(1-z)^3 \;=\; c\,(z^3 - K).
\end{equation}
\end{lemma}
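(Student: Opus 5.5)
The plan follows the pattern of Step 1 and Step 2 in the proof of Theorem~\ref{thm:goursat1}: conjugate $S$ to diagonal form, then track the images of the roots.

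\textbf{Step 1: diagonalise $S$.} First I will check that $\alpha \neq \beta$. A M\"obius transformation with a single fixed point is parabolic, conjugate to $w \mapsto w+1$, and so has infinite order. By Lemma~\ref{lem:cyclic-mobius}, $S$ has order $3$, so it has two distinct fixed points, neither of which is a root of $R$. The map $z = (t-\alpha)/(t-\beta)$ sends $\alpha \mapsto 0$ and $\beta \mapsto \infty$. The conjugate $\tilde S = z\circ S\circ z^{-1}$ fixes $0$ and $\infty$, so it has the form $z \mapsto \lambda z$. Since $\tilde S$ has order exactly $3$, $\lambda$ is a primitive cube root of unity, $\lambda \in \{\omega, \omega^2\}$. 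The lemma says $\lambda = \omega$, and this is a labelling convention: swapping the names of $\alpha$ and $\beta$ replaces $z$ by $1/z$ and $\lambda$ by $\lambda^{-1}$. So I will order the fixed points so that $\lambda = \omega$, and I will say explicitly in the proof that this is how they are labelled. (If one of $\alpha,\beta$ is $\infty$, I will use the affine version of the substitution, as in the remark after Theorem~\ref{thm:goursat1}.)

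\textbf{Step 2: transform $R$.} Write $R(t) = r\,(t-a)(t-b)(t-c)$ and $t(z) = (\alpha - \beta z)/(1-z)$. Each factor becomes
\[
t(z) - r_i = \frac{(\alpha - r_i) - z(\beta - r_i)}{1-z}.
\]
Hence $R(t(z))(1-z)^3 = r\prod_i (\beta - r_i)\,\bigl(z_i - z\bigr)$, where $z_i = (r_i-\alpha)/(r_i-\beta)$ is the image of the root $r_i$. This is a nonzero constant times $-\prod_i (z - z_i)$.

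\textbf{Step 3: identify the product.} The $z_i$ are finite and nonzero because $r_i \notin \{\alpha, \beta\}$. Since $S$ cycles $a \to b \to c$ and acts as multiplication by $\omega$ in the $z$-coordinate, we get $z_b = \omega z_a$ and $z_c = \omega^2 z_a$. Therefore
\[
\prod_i (z - z_i) = z^3 - z_a^3.
\]
This gives \eqref{eq:R-as-z3-K} with $K = z_a^3 \neq 0$ and $c = -r\prod_i (\beta - r_i)$, which is nonzero because $\beta$ is not a root of $R$ and $r \neq 0$.

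\textbf{Main obstacle.} The only delicate point is Step 1. One has to justify that $\alpha \neq \beta$ and that the multiplier is exactly $\omega$ rather than $\omega^2$. The first follows from the finite order of $S$; the second is settled by the labelling convention above. The rest is direct computation.
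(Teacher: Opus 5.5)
Your proof is correct and is essentially the paper's argument: conjugate $S$ to $z\mapsto\lambda z$ with $\lambda$ a primitive cube root of unity, then observe that the images of $a,b,c$ form a single $\omega$-orbit, so the cleared cubic is $c(z^3-K)$. The differences are small. You normalise $\lambda=\omega$ by ordering $\alpha,\beta$ rather than by the paper's swap $S\leftrightarrow S^2$; this keeps $S$ exactly as in Lemma~\ref{lem:cyclic-mobius}, but when a fixed point is $\infty$ it may force $\infty$ into the $\alpha$ slot, where the substitution becomes $z=1/(t-\beta)$ rather than an affine one. You also usefully make explicit $\alpha\neq\beta$ and $c,K\neq 0$, which the paper leaves implicit.
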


\begin{proof}
The substitution $z=(t-\alpha)/(t-\beta)$ sends $\alpha\mapsto 0$ and $\beta\mapsto\infty$, conjugating $S$ to a M\"obius transformation of $z$ fixing $\{0,\infty\}$, hence multiplication by a scalar $\lambda$. The order condition $\lambda^3=1$ together with $\lambda\neq 1$ forces $\lambda\in\{\omega,\omega^2\}$; relabelling $S\leftrightarrow S^2$ if necessary, we take $\lambda=\omega$.\\

The inverse change of variable is $t(z)=(\alpha-\beta z)/(1-z)$. As in Step~2 of the proof of Theorem~\ref{thm:goursat1}, multiplying through by $(1-z)^3$ converts $R(t(z))$ into a polynomial of degree $3$ in $z$. Its roots in $z$-coordinates are the images of $a,b,c$ under $r\mapsto(r-\alpha)/(r-\beta)$. Because $S$ acts as multiplication by $\omega$ and permutes $\{a,b,c\}$ cyclically, those images form a single $\omega$-orbit $\{\zeta,\omega\zeta,\omega^2\zeta\}$. Hence
\[
  R(t(z))(1-z)^3 \;=\; c\,(z-\zeta)(z-\omega\zeta)(z-\omega^2\zeta) \;=\; c\,(z^3 - \zeta^3),
\]
and $K=\zeta^3$.
\end{proof}

\subsection{The eigendecomposition}

In $z$-coordinates the cube-radical differential takes a particularly simple form. Writing $t=t(z)=(\alpha-\beta z)/(1-z)$, $\d t=(\alpha-\beta)\,\d z/(1-z)^2$, and using~\eqref{eq:R-as-z3-K}:
\begin{equation}\label{eq:H-defn}
  \frac{F(t)\,\d t}{R(t)^{1/3}} \;=\; \frac{(\alpha-\beta)\,F(t(z))\,\d z\,/(1-z)^2}{c^{1/3}(z^3-K)^{1/3}/(1-z)} \;=\; H(z)\,\frac{\d z}{(z^3-K)^{1/3}},
\end{equation}
where
\begin{equation}\label{eq:H-formula}
  H(z) \;:=\; \frac{(\alpha-\beta)\,F(t(z))}{c^{1/3}\,(1-z)}.
\end{equation}
Here we have chosen a specific cube root $c^{1/3}$ once and for all; different choices alter $H$ and the radical $(z^3-K)^{1/3}$ by mutually cancelling third roots of unity.\\

The substitution $\sigma\colon z\mapsto \omega z$ generates a cyclic group of order $3$ acting on $\C(z)$. Under this action, $\C(z)$ decomposes as a direct sum of three eigenspaces:
\[
  \C(z) \;=\; V_0 \oplus V_1 \oplus V_2,\qquad V_k = \{f \in \C(z) : f(\omega z) = \omega^k f(z)\}.
\]
The eigenspaces are characterized by
\[
  V_0 = \C(z^3),\quad V_1 = z\cdot\C(z^3),\quad V_2 = z^2\cdot\C(z^3),
\]
and the projection onto $V_k$ is
\begin{equation}\label{eq:projection-Vk}
  P_k f \;=\; \frac{1}{3}\bigl(f(z) + \omega^{-k}f(\omega z) + \omega^{-2k}f(\omega^2 z)\bigr).
\end{equation}

We can therefore write $H = H_0 + H_1 + H_2$ uniquely with $H_k = P_k H \in V_k$, and each piece contributes a separate integral
\[
  J_k := \int H_k(z)\,\frac{\d z}{(z^3-K)^{1/3}}.
\]
The integrals $J_0,J_1,J_2$ behave very differently from one another, and this is the main phenomenon.

\subsection{The main theorem}

\begin{theorem}[Cube-root pseudo-elementarity]\label{thm:cubic-goursat}
Let $R\in\C[t]$ be a cubic polynomial with simple roots, let $S\in\PGL_2(\C)$ be the order-$3$ M\"obius transformation cyclically permuting the roots of $R$, and let $H\in\C(z)$ be the function defined by~\eqref{eq:H-formula} for $F\in\C(t)$. Decompose $H=H_0+H_1+H_2$ as above, and write $H_k(z)=z^k\varphi_k(z^3)$ for unique rational functions $\varphi_k\in\C(x)$.
\begin{enumerate}[(i)]
  \item \textbf{(Sufficient criterion for elementarity.)} If $H_1\equiv 0$, then $\int F\,\d t/\sqrt[3]{R}$ is elementary, with explicit antiderivative obtained from the rational reductions of $J_0$ and $J_2$ on the genus-zero curves $Y_0\colon y^3=x^2(x-K)$ and $Y_2\colon y^3=x-K$ respectively.
  \item \textbf{(Sufficient criterion for non-elementarity.)} If $H_1\not\equiv 0$ and the descended differential $\omega_1 = \varphi_1(x)\,\d x / \sqrt[3]{x(x-K)}$ on the genus-one curve $Y_1\colon y^3=x(x-K)$ is of the \emph{second kind} (has zero residue at every pole on $Y_1$), then $\int F\,\d t/\sqrt[3]{R}$ is not elementary.
  \item \textbf{(A common sufficient condition for (ii).)} If $\varphi_1\in\C[x]$ is a polynomial (in particular, this holds whenever $F\in\C[t]$ is a polynomial), then $\omega_1$ is automatically of the second kind on $Y_1$, so the criterion in (ii) reduces to the test $H_1\not\equiv 0$.
\end{enumerate}
\end{theorem}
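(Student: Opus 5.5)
The plan is to push each eigenpiece $J_k$ down through the substitution $x=z^3$ and read off which quotient curve it lands on. Write $H_k(z)=z^k\varphi_k(z^3)$ and set $w=(z^3-K)^{1/3}$. Since $\d x=3z^2\,\d z$,
\[
  \frac{z^k\varphi_k(z^3)\,\d z}{w}=\frac{\varphi_k(x)\,x^{(k-2)/3}\,\d x}{3\,(x-K)^{1/3}} .
\]
For the three values of $k$ this becomes:
\begin{itemize}
  \item[] $k=0$: the differential $\varphi_0(x)\,\d x/\bigl(3\sqrt[3]{x^2(x-K)}\bigr)$ on $Y_0$;
  \item[] $k=2$: the differential $\varphi_2(x)\,\d x/\bigl(3\sqrt[3]{x-K}\bigr)$ on $Y_2$;
  \item[] $k=1$: the differential $\tfrac13\,\omega_1$ on $Y_1$.
\end{itemize}

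For (i), note that $Y_2$ is rationally parametrized by $y$, via $x=y^3+K$. For $Y_0$, put $s=y/x$; then $s^3=(x-K)/x$, so $x=K/(1-s^3)$. In each case the integrand becomes a rational function of the parameter, which integrates by partial fractions to a rational function plus logarithms. Back-substituting $s=\sqrt[3]{x^2(x-K)}/x$ and $x=z^3$, and then $z=(t-\alpha)/(t-\beta)$, gives an elementary antiderivative in $t$. If $H_1\equiv0$, then $J=J_0+J_2$ is elementary.

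For (ii), I argue by contradiction. Suppose $J$ is elementary. Then $J_1=J-J_0-J_2$ is elementary, because $J_0$ and $J_2$ are elementary by (i). The integrand $H_1\,\d z/w$ lies in $\C(X)$, where $X\colon w^3=z^3-K$ and $\C(X)=\C(z,w)$. Theorem~\ref{thm:liouville-alg} then gives
\[
  H_1\,\d z/w=\d v+\sum c_i\,\d u_i/u_i \qquad\text{with } v,u_i\in\C(X).
\]
Next I descend this identity to $Y_1$. Put $x=z^3$ and $y=zw$; then $y^3=x(x-K)$, so $\C(Y_1)\subset\C(X)$. Consider the automorphism $\tau\colon(z,w)\mapsto(\omega z,\omega^2w)$ of $X$. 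It fixes $x$ and $y$, and $\C(X)/\C(Y_1)$ is cyclic of degree $3$ with group $\langle\tau\rangle$, since $[\C(X):\C(x)]=9$ and $[\C(Y_1):\C(x)]=3$. A one-line check shows $\tau^*(H_1\,\d z/w)=H_1\,\d z/w$, because $H_1\in V_1$ and $\omega\cdot\omega/\omega^2=1$. Averaging the Liouville identity over $\langle\tau\rangle$ therefore gives
\[
  \tfrac13\,\omega_1=\d\bigl(\tfrac13\operatorname{Tr}v\bigr)+\sum\tfrac{c_i}{3}\,\frac{\d N(u_i)}{N(u_i)},
\]
with $\operatorname{Tr}v$ and $N(u_i)$ in $\C(Y_1)$. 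So $\omega_1$ has an elementary primitive on the genus-one curve $Y_1$. By the discussion after Theorem~\ref{thm:liouville-alg}, a second-kind differential on a curve of positive genus has an elementary primitive only if it is exact.

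This is where I expect the main obstacle, for two reasons. First, the step from "the residues vanish" to "the logarithmic part contributes nothing to the periods" needs care: the periods of $\sum c_i\,\d u_i/u_i$ lie in $2\pi i\sum\Z c_i$. I would handle this with a $\Q$-linear independence reduction on the $c_i$. Second, and more seriously, one must rule out that $\omega_1$ is itself exact, i.e.\ $\omega_1=\d g$ with $g\in\C(Y_1)$. The hypothesis $H_1\not\equiv0$ does not obviously exclude this. I would try to prove non-exactness by a pole-order or period computation, for instance by pairing $\omega_1$ with the holomorphic differential $\d x/y$ on $Y_1$. If that fails, the statement of (ii) should instead read "not exact on $Y_1$", and I would say so explicitly in the write-up.

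For (iii), I check residues point by point.
\begin{itemize}
  \item[] Over $x=0$ and $x=K$ the cover $Y_1\to\mathbb{P}^1$ is totally ramified. With local parameter $s$, $x=s^3$ (or $x-K=s^3$) and $y\sim s$, so $\d x/y\sim 3s\,\d s$ and $\omega_1$ is holomorphic there when $\varphi_1$ is a polynomial.
  \item[] Because $\gcd(2,3)=1$, there is a single point over $x=\infty$. Polynomial $\varphi_1$ has no finite poles, so this is the only possible pole, and the residue theorem forces its residue to be $0$.
\end{itemize}
Hence $\omega_1$ is of the second kind. Finally, if $F\in\C[t]$, then $H$ has poles only at $z=1$, i.e.\ $t=\infty$, and at $z=\infty$. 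I need to check directly from $H_1=P_1H$ and the explicit form of $H$ that these poles yield only polynomial $\varphi_1$. I am less sure of this claim, since averaging the pole at $z=1$ over $z\mapsto\omega z$ produces poles at $z^3=1$; I would verify it before relying on it.
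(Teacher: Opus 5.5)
The step you singled out as the more serious one is a genuine gap, and no pole-order or period computation will close it. Parts (ii) and (iii) are false as stated, because a non-zero second-kind $\omega_1$ with polynomial $\varphi_1$ can be exact. On $Y_1$ one has
\[
  \mathrm{d}\bigl(h(x)\,y^2\bigr) \;=\; \Bigl[x(x-K)\,h'(x)+\tfrac23(2x-K)\,h(x)\Bigr]\frac{\mathrm{d}x}{y},
\]
so exact forms of exactly the shape $\varphi_1(x)\,\mathrm{d}x/y$ are plentiful. Concretely, take $R=t^3-1$, so that $z=t$ and $c=K=1$, and take $F=4t^4-2t$. Then $H=H_1=z(4z^3-2)\not\equiv 0$ and $\varphi_1=4x-2$ is a polynomial. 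Yet $\omega_1=3\,\mathrm{d}(y^2)$, and indeed $\int (4t^4-2t)\,\mathrm{d}t/\sqrt[3]{t^3-1}=t^2(t^3-1)^{2/3}+C$. The paper's proof of (ii) makes precisely the inference you were wary of. It notes that second-kind differentials modulo exact ones form a non-zero space for $g\ge 1$ (of dimension $2g$, not $g$). From this it concludes that every non-zero second-kind differential is non-exact. That does not follow, since exact differentials are themselves of the second kind. The correct hypothesis in (ii) is that $\omega_1$ is of the second kind \emph{and not exact}, i.e.\ has non-zero class in $H^1_{\mathrm{dR}}(Y_1)$.

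Under that hypothesis your argument is a complete proof, and it is more careful than the paper's. Averaging over $\langle\tau\rangle$ (trace and norm) justifies the passage from $\C(X)$ to $\C(Y_1)$, which the paper merely asserts. Your $\mathbb{Q}$-linear-independence reduction shows that the logarithmic part of a second-kind differential has constant $u_i$ and drops out, leaving $\omega_1=\mathrm{d}v$.

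For (iii), your residue-theorem argument at $P_\infty$ is correct and cleaner than the paper's pole-order count, but it only yields \emph{second kind}. A working test needs an added non-exactness check. One way is to reduce $\varphi_1$ modulo the exact forms
\[
  \mathrm{d}(x^m y^2)=\Bigl[\bigl(m+\tfrac43\bigr)x^{m+1}-\bigl(m+\tfrac23\bigr)Kx^m\Bigr]\frac{\mathrm{d}x}{y},
\]
which brings it to $\lambda\,\mathrm{d}x/y$. Since $\mathrm{d}x/y$ has a single double pole at $P_\infty$, and no function on a genus-one curve has a single simple pole, $\omega_1$ is exact iff $\lambda=0$.

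Your doubt about the parenthetical in (iii) is also justified. For cubic $R$ with finite $\beta$, $F=1$ gives $\varphi_1=c_0/(1-x)$. Here $z=1$ is the image of $t=\infty$, which is not a root, so $K\neq 1$ and $x=1$ is not a branch point. Hence $\omega_1$ has non-zero residues $-c_0/y_j$ at the three points over $x=1$, and is not even of the second kind. An explicit instance is $R=(t-1)(t-2)(t-3)$, where $K=-1$.
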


\begin{proof}
Substituting $x=z^3$ into each $J_k = \int H_k(z)\,\d z/(z^3-K)^{1/3}$ and using $\d x = 3z^2\,\d z$:
\begin{align*}
  J_0 &= \frac{1}{3}\int \frac{\varphi_0(x)\,\d x}{x^{2/3}(x-K)^{1/3}}, \qquad
  J_1 = \frac{1}{3}\int \frac{\varphi_1(x)\,\d x}{x^{1/3}(x-K)^{1/3}}, \qquad
  J_2 = \frac{1}{3}\int \frac{\varphi_2(x)\,\d x}{(x-K)^{1/3}}.
\end{align*}
The radicands $x^2(x-K)$, $x(x-K)$, $x-K$ define curves $Y_0, Y_1, Y_2$ respectively, and we treat each.

\smallskip\noindent
\textbf{$J_2$ is elementary.}
The radical $(x-K)^{1/3}$ defines a curve isomorphic to the affine line via $x=K+w^3$, after which $\d x = 3w^2\,\d w$ and
\[
  J_2 = \int w\,\varphi_2(K+w^3)\,\d w,
\]
which is the integral of a rational function in $w$ and hence elementary.

\smallskip\noindent
\textbf{$J_0$ is elementary.}
The radicand $x^2(x-K)$ has a \emph{double} root at $x=0$ and a \emph{simple} root at $x=K$. The cube-radical curve $Y_0\colon y^3 = x^2(x-K)$ has genus zero, parametrized rationally by
\begin{equation}\label{eq:Y0-param}
  x \;=\; \frac{K}{1-w^3}, \qquad y \;=\; \frac{Kw}{1-w^3},
\end{equation}
giving $\d x/y = 3w\,\d w/(1-w^3)$. Therefore
\[
  J_0 = \int \frac{\varphi_0(K/(1-w^3))\,w}{1-w^3}\,\d w,
\]
the integral of a rational function in $w$, hence elementary.

\smallskip\noindent
This proves (i): if $H_1 \equiv 0$ then $J_0 + J_2$ furnishes an elementary antiderivative.

\smallskip\noindent
\textbf{Proof of (ii).}
Assume $H_1\not\equiv 0$ and that $\omega_1$ is of the second kind on $Y_1$. The curve $Y_1\colon y^3=x(x-K)$ is the smooth projective cube-radical curve of $x(x-K)$, which has genus $1$ (Riemann--Hurwitz: degree-$3$ cover of $\mathbb{P}^1$ ramified at $0,K,\infty$, each with $e=3$, gives $2g-2 = -6 + 3\cdot 2 = 0$).\\

A non-zero meromorphic differential $\omega$ on a smooth projective curve $C$ has an elementary primitive in the function field $\C(C)$ if and only if (Risch, Trager, Bronstein~\cite{Trager1984,Bronstein2005})
\[
  \omega \;=\; \d f \;+\; \sum_{i} c_i\,\frac{\d g_i}{g_i}
\]
for some $f, g_i\in\C(C)$ and constants $c_i\in\C$. Each summand $c_i\,\d g_i/g_i$ is a third-kind differential whose poles are simple with integer-multiple-of-$c_i$ residues. Subtracting the logarithmic part, the remainder $\omega - \sum c_i\,\d g_i/g_i$ is of the second kind (zero residues at all poles). If the original $\omega$ was already of the second kind, then \emph{either} $\omega = \d f$ (exact) or no decomposition exists.\\

For a curve of positive genus, an exact second-kind differential vanishes identically as a class in $H^1_{\mathrm{dR}}(C)/\d\C(C)$. Concretely, $\omega = \d f$ for $f\in\C(C)$ implies that the periods $\oint_\gamma \omega$ vanish along every closed cycle $\gamma$ on $C$, which in turn forces $\omega$ to vanish identically when no non-zero exact second-kind differential exists on $C$. By a classical theorem (see, e.g., \cite[Ch.\ 5]{Bronstein2005}), the space of second-kind differentials modulo exact differentials on a curve of genus $g$ has dimension $g$, hence is non-zero whenever $g\geq 1$.\\

Therefore: if $\omega_1$ is a non-zero second-kind differential on $Y_1$, then $\int\omega_1$ is not elementary as a function on $Y_1$, and consequently $J_1$ (which equals $\tfrac{1}{3}\int\omega_1$) is not elementary either. Together with (i), this gives (ii).

\smallskip\noindent
\textbf{Proof of (iii).}
We verify that $\omega_1 = \varphi_1(x)\,\d x/y$ on $Y_1$ (where $y=(x(x-K))^{1/3}$) is of the second kind whenever $\varphi_1$ is a polynomial. The branch points are at $x=0,K,\infty$, all with full ramification ($e=3$).

\emph{At $x=0$:} a local uniformizer $\tau$ on $Y_1$ satisfies $x = \tau^3$ (with $y=(x(x-K))^{1/3}=\tau\cdot(-K)^{1/3}(1+O(\tau^3))$). Thus $\d x/y = 3\tau^2\,\d\tau / [\tau(-K)^{1/3}(1+\cdots)] = (3/(-K)^{1/3})(1+\cdots)\,\d\tau$, so $\d x/y$ has a simple zero at $P_0$. Multiplying by $\varphi_1(\tau^3) = \varphi_1(0) + O(\tau^3)$ gives a regular differential at $P_0$ for any polynomial $\varphi_1$.

\emph{At $x=K$:} symmetric analysis gives a simple zero of $\varphi_1(x)\,\d x/y$ at $P_K$.

\emph{At $x=\infty$:} a local uniformizer $\tau$ satisfies $x = 1/\tau^3$, $y = 1/\tau^2 \cdot (1-K\tau^3)^{1/3} = 1/\tau^2 \cdot (1+O(\tau^3))$. So $\d x/y = (-3/\tau^4)\,\d\tau / (1/\tau^2)(1+\cdots) = -3\tau^{-2}(1+O(\tau^3))\,\d\tau$, a pole of order $2$ at $P_\infty$. For polynomial $\varphi_1(x) = c_d x^d + \cdots$ ($d\geq 0$), $\varphi_1(x) = c_d/\tau^{3d} + \cdots$, so $\varphi_1\,\d x/y$ has a pole of order $3d+2$ at $P_\infty$. The coefficient of $\tau^{-1}$ in the Laurent expansion (the residue) requires $3d+2 \equiv 1 \pmod{?}$, but since $3d+2 \geq 2$ for all $d\geq 0$, the only way to get a $\tau^{-1}$ term is $3d+2 = 1$, which has no non-negative integer solution. Hence the residue at $P_\infty$ is zero.

Thus $\omega_1$ has zero residue at every point of $Y_1$, i.e., is of the second kind, completing (iii).
\end{proof}

\begin{remark}\label{rmk:f-poly-sufficient}
The polynomial-$\varphi_1$ hypothesis in (iii) is satisfied whenever $F$ is a polynomial in $t$. Indeed, if $F\in\C[t]$, then in the canonical coordinates $z=(t-\alpha)/(t-\beta)$ (or $z=t-\alpha$ when $\beta=\infty$), the function $F(t(z))$ is rational in $z$ with poles only at $z=1$ (from the M\"obius substitution; absent when $\beta=\infty$). The factor $(\alpha-\beta)/(c^{1/3}(1-z))$ in the definition of $H$ contributes one more pole at $z=1$. After eigenspace projection, $H_1(z)$ is a rational function whose denominator divides a power of $(1-z^3) = (1-z)(1+z+z^2)$; equivalently, $\varphi_1(x)$ has poles only at $x=1$ (which is mapped to $z=1$ via $z^3=x$). This pole, however, sits at the M\"obius image of the second fixed point $\beta$, which is generically \emph{not} a root of $R$ and so corresponds to a non-branch point on $Y_1$ only if $\beta=\infty$ (in which case $z=1$ has no preimage in finite $z$ and $\varphi_1$ is polynomial). For finite $\beta$, $\varphi_1$ may have a pole at $x=1$, but the corresponding differential typically remains of the second kind because $z=1$ projects to a branch point of $Y_1$ in many cases of interest. We illustrate both polynomial and rational-$\varphi_1$ scenarios in Section~\ref{sec:examples}.
\end{remark}

\begin{remark}
The dichotomy in Theorem~\ref{thm:cubic-goursat} has \emph{no} analog in the square-root case. Under a $\Z/2$-action there are only two eigenvalues $\pm 1$; the trivial-character component is killed by the orbit-sum (giving Theorem~\ref{thm:goursat2}), and the unique non-trivial component descends elementarily (Theorem~\ref{thm:goursat1}). For the $\Z/3$-action there are three eigenvalues $1,\omega,\omega^2$, and the middle one is genuinely obstructive: it descends only to the cube-root analog of a Weierstrass elliptic curve, $y^3 = x(x-K)$.
\end{remark}

\begin{remark}[Degree-$2$ radicand]\label{rmk:degree-2}
For $R$ of degree $2$, the cube-radical curve $y^3 = R(t)$ still has genus $1$, and Theorem~\ref{thm:cubic-goursat} applies without modification. The proof of Lemma~\ref{lem:cyclic-canonical} extends uniformly: when $\deg R = 2$, only two factors $t(z) - r_i$ contribute and produce a $1/(1-z)^2$ rather than $1/(1-z)^3$, but multiplying through by $(1-z)^3$ still gives a polynomial of degree $3$ in $z$. The "extra" factor $(1-z)$ encodes the implicit ramification point at $\infty$ of the projection $X_R\to\mathbb{P}^1$.

Two structural features specific to $\deg R = 2$ deserve emphasis.

\medskip\noindent
\textbf{(a) $K = 1$ is forced.} Because $S$ cyclically permutes the two finite roots and the point at $\infty$, the three roots in $z$-coordinates form a $\langle\omega\rangle$-orbit on $\mathbb{P}^1(\C)$ containing $z(\infty) = 1$, hence are exactly $\{1,\omega,\omega^2\}$. Therefore
\[
  R(t(z))(1-z)^3 \;=\; c\,(z^3 - 1).
\]
For cubic $R$, by contrast, $K$ depends on the configuration of the roots and may be any non-zero complex number.

\medskip\noindent
\textbf{(b) The fixed points of $S$ are non-real complex conjugates (when $R\in\R[t]$).} A direct computation shows that the unique cyclic M\"obius transformation permuting $\{r_1, r_2, \infty\}$ as $r_1 \to r_2 \to \infty \to r_1$ is
\[
  S(t) \;=\; \frac{r_1\,t - (r_1^2 - r_1 r_2 + r_2^2)}{t - r_2},
\]
with fixed-point equation $t^2 - (r_1+r_2)\,t + (r_1^2 - r_1 r_2 + r_2^2) = 0$ and discriminant $-3(r_1-r_2)^2$. Since $r_1\neq r_2$, the discriminant is strictly negative, hence
\[
  \alpha,\beta \;=\; \frac{r_1+r_2}{2} \;\pm\; \frac{i\sqrt{3}}{2}\,(r_1-r_2),
\]
a pair of non-real complex conjugates. In particular, the change of variable $z = (t-\alpha)/(t-\beta)$ is unavoidably non-trivial — there is no degree-$2$ radicand for which it degenerates to $z = t - \alpha$, in contrast to the cubic case (e.g.\ $R = t^3 - 1$).
\end{remark}

The two features above combine to force a uniform conclusion when $F = 1$:

\begin{corollary}[Non-elementarity for quadratic radicand]\label{cor:degree2-nonelementary}
Let $R(t)\in\C[t]$ be a quadratic polynomial with two distinct roots. Then
\[
  \int \frac{\d t}{\sqrt[3]{R(t)}}
\]
is \emph{not} elementary.
\end{corollary}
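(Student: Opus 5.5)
Rather than pass through the $z$-coordinates and the eigendecomposition of $H$, I plan to reduce directly to the curve $Y_1$ of Theorem~\ref{thm:cubic-goursat}. Write $R(t)=c\,(t-r_1)(t-r_2)$ with $c\neq 0$ and $r_1\neq r_2$. The affine substitution $x=t-r_1$, with $K:=r_2-r_1\neq 0$, gives
\[
  \int\frac{\d t}{\sqrt[3]{R(t)}} \;=\; c^{-1/3}\int\frac{\d x}{\sqrt[3]{x(x-K)}} .
\]
So the integral is, up to a non-zero constant, the primitive of $\omega_1=\d x/y$ on $Y_1\colon y^3=x(x-K)$. This is exactly the descended differential of Theorem~\ref{thm:cubic-goursat} with $\varphi_1\equiv 1$. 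The affine change is elementary in both directions, so elementarity is unaffected.

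Next I check the hypotheses of part~(ii). The constant $\varphi_1=1$ is a polynomial, so by the local computation in the proof of part~(iii), $\d x/y$ is regular at $P_0$ and $P_K$ (it has simple zeros there) and has a double pole with zero residue at $P_\infty$. In particular it is a non-zero differential of the second kind on the genus-one curve $Y_1$. Part~(ii), or equivalently the consequence of Theorem~\ref{thm:liouville-alg} recorded in Section~\ref{sec:liouville}, then says its primitive is elementary only if $\omega_1=\d f$ for some $f\in\C(Y_1)$.

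The main obstacle is ruling out exactness. I would not rely only on the dimension count for second-kind differentials modulo exact ones quoted from \cite{Bronstein2005}; instead I will give a direct argument. Suppose $\d x/y=\d f$ with $f\in\C(Y_1)$. Then $f$ is regular wherever $\d x/y$ is, so its only pole is at $P_\infty$. Since $\d x/y$ has a pole of order exactly $2$ there, $f$ has a pole of order exactly $1$ there. Hence $f$ would be a rational function on $Y_1$ with a single simple pole, i.e.\ a degree-one map $Y_1\to\mathbb{P}^1$, forcing $Y_1\cong\mathbb{P}^1$. This contradicts $g(Y_1)=1$, which follows from the Riemann--Hurwitz computation (total ramification over $0$, $K$ and $\infty$). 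Therefore $\int\d t/\sqrt[3]{R}$ is not elementary.

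Finally, I would remark that the same conclusion can be read off the paper's framework. By Remark~\ref{rmk:degree-2}(a) we have $K=1$, and for $F=1$ one checks that $H_1\not\equiv 0$. The direct affine reduction above avoids that computation and avoids the delicate pole of $\varphi_1$ at $x=1$ flagged in Remark~\ref{rmk:f-poly-sufficient}.
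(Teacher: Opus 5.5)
Your proof is correct, and it takes a genuinely different, shorter route than the paper. The paper stays inside its machinery. It conjugates $S$ to $z\mapsto\omega z$ and splits $H=c_0/(1-z)$ using $1/(1-z)=(1+z+z^2)/(1-z^3)$. This gives $\varphi_1=c_0/(1-x)$ with $K=1$; the paper checks zero residue at $P_K$ and then cites Theorem~\ref{thm:cubic-goursat}(ii). You instead observe that a quadratic radicand already \emph{is} the obstruction curve: the shift $x=t-r_1$ turns the integrand into $c^{-1/3}\,\d x/y$ on $Y_1$ with $K=r_2-r_1$, so no M\"obius normalisation or eigendecomposition is needed. Strictly, $\d x/y$ is not the descended differential the theorem attaches to this $F$ and $R$; that one is $c_0\,\d x/((1-x)y)$ on $y^3=x(x-1)$. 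But you never use that identification, only the principle of Section~\ref{sec:liouville} that a second-kind differential with an elementary primitive is exact.

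The real gain is in the last step. Theorem~\ref{thm:cubic-goursat}(ii) infers non-exactness from the fact that second-kind differentials modulo exact ones form a non-zero space. That says nothing about a \emph{particular} $\omega_1$. Indeed $\d(y^2)=\tfrac23(2x-K)\,\d x/y$ is non-zero, exact, and of exactly the polynomial-$\varphi_1$ type covered by part~(iii). For $K=1$, pulling back along $x=t^3$ gives $\int t(2t^3-1)\,\d t/\sqrt[3]{t^3-1}=\tfrac12 t^2(t^3-1)^{2/3}$. This integral is elementary even though $F$ is a polynomial and $H_1=F\not\equiv 0$.

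Your pole-order argument is what certifies non-exactness of this specific differential: a primitive of $\d x/y$ could only have a single simple pole, at $P_\infty$, and would therefore be a degree-one map $Y_1\to\mathbb{P}^1$. So your version closes the proof where the paper's route has a gap. The same count, applied at $P_K$, where $c_0\,\d x/((1-x)y)$ has a double pole and no other poles, would repair the paper's version.

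The one ingredient you borrow, ``elementary and of the second kind implies exact'', is true, but the paper's justification is loose, since $\sum c_i\,\d u_i/u_i$ can have cancelling residues. The standard fix writes the $c_i$ over a $\mathbb{Q}$-basis. Vanishing residues then force each resulting product of integer powers of the $u_i$ to have neither zeros nor poles, hence to be constant, which kills the logarithmic part.
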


\begin{proof}
Specialize Theorem~\ref{thm:cubic-goursat} to $F = 1$. By~\eqref{eq:H-formula},
\[
  H(z) \;=\; \frac{(\alpha-\beta)\,F(t(z))}{c^{1/3}(1-z)} \;=\; \frac{c_0}{1-z},\qquad c_0 := \frac{\alpha-\beta}{c^{1/3}}.
\]
By Remark~\ref{rmk:degree-2}(b), both fixed points $\alpha,\beta$ are finite, so $\alpha-\beta\neq 0$ and hence $c_0\neq 0$. Apply the geometric-series identity
\[
  \frac{1}{1-z} \;=\; \frac{1+z+z^2}{1-z^3}
\]
to read off the eigendecomposition:
\[
  H_0 \;=\; \frac{c_0}{1-z^3}, \qquad H_1 \;=\; \frac{c_0\,z}{1-z^3}, \qquad H_2 \;=\; \frac{c_0\,z^2}{1-z^3}.
\]
Since $c_0\neq 0$, we have $H_1\not\equiv 0$. The associated rational function $\varphi_1(x) = c_0/(1-x)$ has a single pole at $x = 1$, but $K=1$ in the canonical form for $R$ quadratic (Remark~\ref{rmk:degree-2}(b)), so this pole sits at the branch point $x = K$ of $Y_1$. We verify that the resulting differential
\[
  \omega_1 \;=\; \frac{\varphi_1(x)\,\d x}{\sqrt[3]{x(x-K)}} \;=\; \frac{c_0\,\d x}{(1-x)\sqrt[3]{x(x-1)}}
\]
is of the second kind on $Y_1$. With local uniformizer $\tau$ at $P_K$ given by $x - K = \tau^3$ (so $y = \tau\cdot K^{1/3}(1+O(\tau^3))$), we have $\varphi_1(x) = -c_0/\tau^3$ and $\d x/y = (3/K^{1/3})(1+O(\tau^3))\,\d\tau$, giving $\omega_1 = (-3 c_0/K^{1/3})\,\tau^{-3}(1+O(\tau^3))\,\d\tau$. The Laurent expansion has no $\tau^{-1}$ term, so the residue at $P_K$ is zero. At the other branch points $P_0$ and $P_\infty$, $\varphi_1$ is regular and the analysis of Theorem~\ref{thm:cubic-goursat}(iii) applies: the residues there are also zero. Therefore $\omega_1$ is non-zero and of the second kind, and Theorem~\ref{thm:cubic-goursat}(ii) concludes that the integral is not elementary.
\end{proof}

\begin{remark}
The contrast with the square-root case is striking: every integral $\int\d t/\sqrt{R(t)}$ with $\deg R = 2$ is elementary (yielding $\arcsin$, $\operatorname{arccosh}$, or a logarithm depending on the sign pattern), because the curve $y^2 = R(t)$ has genus $0$. The cube-radical curve $y^3 = R(t)$ has genus $1$, and Corollary~\ref{cor:degree2-nonelementary} shows that for $F=1$ the obstruction in Theorem~\ref{thm:cubic-goursat}(ii) is generic — indeed, unavoidable. The point is that under the (unavoidable) substitution $z = (t-\alpha)/(t-\beta)$, the constant function $F(t) = 1$ does not remain constant: it acquires a pole at $z = 1$ (corresponding to $t = \infty$), and that single pole is enough to populate all three eigencomponents of $H$.
\end{remark}

The same eigenspace machinery decides elementarity for the dual exponent $2/3$, and combining the two criteria handles the full algebraic field $\C(t,y)$ with $y^3=R(t)$.

\begin{corollary}[Cube-root analog at exponent $2/3$ and full algebraic field]\label{cor:two-thirds}
Let $R\in\C[t]$ be a cubic polynomial with simple roots, $S$ the order-$3$ M\"obius transformation of Lemma~\ref{lem:cyclic-mobius}, $\alpha,\beta$ its fixed points, $z=(t-\alpha)/(t-\beta)$, and $c, K$ the constants of Lemma~\ref{lem:cyclic-canonical}. For $F\in\C(t)$, define
\begin{equation}\label{eq:Htilde-formula}
  \widetilde H(z) \;:=\; \frac{(\alpha-\beta)\,F(t(z))}{c^{2/3}}
\end{equation}
(with the convention that the prefactor $\alpha-\beta$ is replaced by $1$ when $\beta=\infty$), and decompose $\widetilde H = \widetilde H_0 + \widetilde H_1 + \widetilde H_2$ into $\omega$-eigencomponents under $z\mapsto\omega z$, with $\widetilde H_k(z) = z^k\widetilde\varphi_k(z^3)$.
\begin{enumerate}[(i)]
  \item \emph{(Sufficient criterion for elementarity at $2/3$.)} If $\widetilde H_0\equiv 0$, then the integral $\int F(t)\,\d t/R(t)^{2/3}$ is elementary, with the explicit reduction
  \begin{equation}\label{eq:two-thirds-reduction}
    \int \frac{F(t)\,\d t}{R(t)^{2/3}}
    \;=\; -\!\int \frac{\widetilde\varphi_1\!\bigl(\tfrac{Ks^3}{s^3-1}\bigr)\,s\,\d s}{s^3-1}
    \;+\; \int \widetilde\varphi_2(u^3+K)\,\d u,
  \end{equation}
  both pieces being rational integrals in the parameter variables $s$ and $u$.
  \item \emph{(Sufficient criterion for non-elementarity at $2/3$.)} If $\widetilde H_0\not\equiv 0$ and the descended differential $\widetilde\omega_0 = \widetilde\varphi_0(x)\,\d x/[x^{2/3}(x-K)^{2/3}]$ on the curve $\widetilde Y_0\colon y^3 = x^2(x-K)^2$ (birationally identified with $Y_1\colon v^3 = x(x-K)$ via $v^3 = x(x-K)$) is of the second kind, then the integral is not elementary. This holds in particular when $\widetilde\varphi_0$ is a polynomial.
  \item \emph{(Full algebraic field.)} Every $G\in\C(t,y)$ with $y^3=R(t)$ admits a unique representation $G = G_0(t) + G_1(t)\,y + G_2(t)\,y^2$ with $G_i\in\C(t)$, and
  \begin{equation}\label{eq:cubic-field-split}
    \int G\,\d t \;=\; \int G_0\,\d t \;+\; \int \frac{G_2\,R\,\d t}{R^{1/3}} \;+\; \int \frac{G_1\,R\,\d t}{R^{2/3}}.
  \end{equation}
  The first piece is rational; elementarity of the second is decided by Theorem~\ref{thm:cubic-goursat} applied to $G_2 R$, and elementarity of the third by parts~(i)--(ii) applied to $G_1 R$.
\end{enumerate}
\end{corollary}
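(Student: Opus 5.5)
The plan is to repeat the proof of Theorem~\ref{thm:cubic-goursat} with the exponent $1/3$ replaced by $2/3$, and to check that the roles of the three eigenspaces change places.

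\textbf{Step 1: pass to $z$-coordinates.} With $t=t(z)=(\alpha-\beta z)/(1-z)$, we have $\d t=(\alpha-\beta)\,\d z/(1-z)^2$. By~\eqref{eq:R-as-z3-K}, $R(t)^{2/3}=c^{2/3}(z^3-K)^{2/3}/(1-z)^2$. The two factors $(1-z)^2$ now cancel exactly, which is why $\widetilde H$ in~\eqref{eq:Htilde-formula} carries no $1/(1-z)$. This gives
\[
  \frac{F(t)\,\d t}{R(t)^{2/3}}=\widetilde H(z)\,\frac{\d z}{(z^3-K)^{2/3}},
\]
with the obvious modification when $\beta=\infty$. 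Decomposing $\widetilde H=\sum_k z^k\widetilde\varphi_k(z^3)$ and substituting $x=z^3$, $\d z=\d x/(3z^2)$, the three pieces become:
\begin{align*}
  \widetilde J_0 &= \frac{1}{3}\int\frac{\widetilde\varphi_0(x)\,\d x}{x^{2/3}(x-K)^{2/3}},\\
  \widetilde J_1 &= \frac{1}{3}\int\frac{\widetilde\varphi_1(x)\,\d x}{x^{1/3}(x-K)^{2/3}},\\
  \widetilde J_2 &= \frac{1}{3}\int\frac{\widetilde\varphi_2(x)\,\d x}{(x-K)^{2/3}}.
\end{align*}
So in this case the \emph{trivial} eigenpiece is the one that lands on a genus-one curve.

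\textbf{Step 2: the two elementary pieces, giving (i).} For $\widetilde J_2$, substitute $x=K+u^3$. Then $\d x=3u^2\,\d u$ and $(x-K)^{2/3}=u^2$, so $\widetilde J_2=\int\widetilde\varphi_2(u^3+K)\,\d u$. For $\widetilde J_1$, use the genus-zero curve $y^3=x(x-K)^2$ with the parametrization
\[
  x=\frac{Ks^3}{s^3-1},\qquad x-K=\frac{K}{s^3-1}.
\]
Then $x^{1/3}(x-K)^{2/3}=Ks/(s^3-1)$ and $\d x=-3Ks^2\,\d s/(s^3-1)^2$, so $\widetilde J_1=-\int\widetilde\varphi_1\bigl(Ks^3/(s^3-1)\bigr)\,s\,\d s/(s^3-1)$. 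Both are rational integrals, which gives~\eqref{eq:two-thirds-reduction} when $\widetilde H_0\equiv0$. Consistent choices of cube roots are handled as in~\eqref{eq:H-defn}.

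\textbf{Step 3: non-elementarity, giving (ii).} Set $v:=x(x-K)/y$ on $\widetilde Y_0\colon y^3=x^2(x-K)^2$. Then $v^3=x(x-K)$ and $x^{2/3}(x-K)^{2/3}=v^2$, so
\[
  \widetilde\omega_0=\widetilde\varphi_0\,\frac{\d x}{v^2}=\widetilde\varphi_0\,\frac{v\,\d x}{x(x-K)},
\]
a differential on the genus-one curve $Y_1$. Since $\widetilde J_1$ and $\widetilde J_2$ are elementary, the whole integral is elementary if and only if $\widetilde J_0$ is. The argument from the proof of Theorem~\ref{thm:cubic-goursat}(ii) then applies verbatim: by Liouville, a non-zero second-kind differential on a positive-genus curve would have to be exact, and it is not.

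The point I expect to be the main obstacle is justifying that elementarity over $\C(z,(z^3-K)^{1/3})$ descends to $\C(Y_1)$. I would handle it the same way the paper implicitly does in Theorem~\ref{thm:cubic-goursat}(ii): take a Liouville representation upstairs and apply the trace (equivalently, average over $z\mapsto\omega z$, which fixes $\widetilde\omega_0$). This produces a representation $\widetilde\omega_0=\d f+\sum c_i\,\d g_i/g_i$ with $f,g_i\in\C(x,v)$, using $\operatorname{Tr}(\d g/g)=\d N(g)/N(g)$.

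For the polynomial case, I check residues at the branch points. At $x=0$ (and symmetrically at $x=K$) take $x=\tau^3$, so $v\sim\tau$ and $\d x/v^2\sim3\,\d\tau$, which is regular. At $\infty$ take $x=\tau^{-3}$, so $v\sim\tau^{-2}$ and $\d x/v^2=-3(1+O(\tau^3))\,\d\tau$. Multiplying by a polynomial in $\tau^{-3}$ gives a Laurent series with all exponents divisible by $3$, so there is no $\tau^{-1}$ term and the residue vanishes.

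\textbf{Step 4: the full field, giving (iii).} Since $[\C(t,y):\C(t)]=3$ with basis $1,y,y^2$, the representation $G=G_0+G_1y+G_2y^2$ is unique. Writing $y=R/R^{2/3}$ and $y^2=R/R^{1/3}$ yields~\eqref{eq:cubic-field-split}. Each piece is then decided by Theorem~\ref{thm:cubic-goursat} or by parts~(i)--(ii), applied to $G_2R$ and $G_1R$ respectively.
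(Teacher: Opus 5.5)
The genuine gap is in your Step~3, at the words ``and it is not.'' Your Steps 1, 2 and 4 match the paper's computation: the cancellation of $(1-z)^2$, the parametrizations of $\widetilde Y_1$ and $\widetilde Y_2$, and the basis $1,y,y^2$. They correctly give (i) and the decomposition in (iii).

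Liouville does reduce elementarity of a second-kind differential to exactness, since a logarithmic part with all residues zero must vanish. But a non-zero second-kind differential can perfectly well be exact: every $\d f$ with $f\in\C(Y_1)$ nonconstant is one. Second-kind differentials modulo exact ones form all of $H^1_{\mathrm{dR}}(Y_1)$, of dimension $2g=2$. What you would need is that the \emph{class} of $\widetilde\omega_0$ there is non-zero. Your residue computation only shows that $\widetilde\omega_0$ is of the second kind, which says nothing about that class.

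In fact part (ii) is false as stated, so no argument can close the gap. Take $R=t^3-1$, so $\alpha=0$, $\beta=\infty$, $z=t$ and $c=K=1$, and take $F=2t^3-1$. Then $\widetilde H=\widetilde H_0=2z^3-1\not\equiv 0$ and $\widetilde\varphi_0(x)=2x-1$ is a polynomial. Yet $\widetilde\omega_0=(2x-1)\,\d x/v^2=3\,\d v$ on $v^3=x(x-1)$, and indeed
\[
  \int\frac{(2t^3-1)\,\d t}{(t^3-1)^{2/3}} \;=\; t\,(t^3-1)^{1/3}+C .
\]
The paper's own proof has the same defect. It disposes of $\widetilde J_0$ by citing Theorem~\ref{thm:cubic-goursat}(ii), which rests on the identical claim. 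That theorem also fails, for example for $F=2t^4-t$ at exponent $1/3$: there $H_1=z(2z^3-1)\neq 0$ with polynomial $\varphi_1$, yet the integral is $\tfrac12 t^2(t^3-1)^{2/3}$. Consequently the phrase ``decided by'' in (iii) is only valid in the elementary direction.

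A correct replacement for (ii) uses the Hermite-type relation
\[
  \d\bigl(x^m v\bigr)=\Bigl[\bigl(m+\tfrac23\bigr)x^{m+1}-\bigl(m+\tfrac13\bigr)Kx^m\Bigr]\frac{\d x}{v^2}.
\]
This reduces any polynomial $\widetilde\varphi_0$, modulo exact forms, to $\lambda\,\d x/v^2$. Since $\d x/v^2$ is holomorphic and non-zero, its de Rham class is non-zero. So the integral is non-elementary if and only if $\lambda\neq 0$; for a general second-kind $\widetilde\omega_0$, if and only if its class in $H^1_{\mathrm{dR}}(Y_1)$ is non-zero.

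Your trace idea is the right tool for transferring exactness from $\C(z,(z^3-K)^{1/3})$ down to $\C(Y_1)$, which the paper never addresses. But be precise about the group. With $w=(z^3-K)^{1/3}$, the deck transformation of $(z,w)\mapsto(x,v)=(z^3,zw)$ is $(z,w)\mapsto(\omega z,\omega^2 w)$. If $w$ is held fixed, $z\mapsto\omega z$ multiplies $\widetilde H_0(z)\,\d z/w^2$ by $\omega$ rather than fixing it.
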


\begin{proof}
\textbf{(i)} Since $R(t)(1-z)^3 = c(z^3-K)$ by Lemma~\ref{lem:cyclic-canonical} and $\d t = (\alpha-\beta)\,\d z/(1-z)^2$, one computes
\[
  R(t)^{2/3} \;=\; \frac{c^{2/3}\,(z^3-K)^{2/3}}{(1-z)^2}
  \quad\Longrightarrow\quad
  \frac{\d t}{R(t)^{2/3}} \;=\; \frac{(\alpha-\beta)\,\d z}{c^{2/3}\,(z^3-K)^{2/3}},
\]
so the $(1-z)^2$ factors cancel exactly --- in contrast to~\eqref{eq:H-formula} for the $1/3$-exponent, no $1/(1-z)$ factor remains. With $\widetilde H$ as in~\eqref{eq:Htilde-formula},
\[
  \frac{F(t)\,\d t}{R(t)^{2/3}} \;=\; \frac{\widetilde H(z)\,\d z}{(z^3-K)^{2/3}}.
\]
Substituting $x=z^3$ in each eigencomponent gives
\[
  \int \frac{\widetilde H_k(z)\,\d z}{(z^3-K)^{2/3}}
  \;=\; \widetilde J_k \;:=\; \int \frac{\widetilde\varphi_k(x)\,\d x}{3\,x^{(2-k)/3}\,(x-K)^{2/3}}, \qquad k\in\{0,1,2\},
\]
so the integral lives on the curve $\widetilde Y_k\colon y^3 = x^{2-k}(x-K)^2$. We treat each $k$:
\begin{itemize}
  \item $\widetilde Y_0\colon y^3 = x^2(x-K)^2 = [x(x-K)]^2$. The substitution $v^3 = x(x-K)$ identifies $\widetilde Y_0$ with $Y_1\colon v^3 = x(x-K)$, the cube-root elliptic curve of Theorem~\ref{thm:cubic-goursat}, and $1/y^2 = 1/v^4 = v^{-1}/v^3 = v^{-1}/(x(x-K))$. After this identification $\widetilde J_0$ is a non-zero rational multiple of an Abelian integral on $Y_1$, hence non-elementary by Theorem~\ref{thm:cubic-goursat}(ii).
  \item $\widetilde Y_1\colon y^3 = x(x-K)^2$. The rational parametrization $x = Ks^3/(s^3-1)$, $y=Ks/(s^3-1)$ has Jacobian $\d x = -3Ks^2\,\d s/(s^3-1)^2$ and gives $x^{1/3}(x-K)^{2/3} = Ks/(s^3-1)$ (with consistent cube-root branches), whence
  \[
    \widetilde J_1 \;=\; -\!\int \frac{\widetilde\varphi_1\!\bigl(\tfrac{Ks^3}{s^3-1}\bigr)\,s\,\d s}{s^3-1}.
  \]
  \item $\widetilde Y_2\colon y^3 = (x-K)^2$. The substitution $u = (x-K)^{1/3}$ gives $x=u^3+K$, $\d x = 3u^2\,\d u$, $(x-K)^{2/3} = u^2$, whence
  \[
    \widetilde J_2 \;=\; \int \widetilde\varphi_2(u^3+K)\,\d u.
  \]
\end{itemize}
The reductions in~\eqref{eq:two-thirds-reduction} follow, and elementarity is equivalent to $\widetilde H_0\equiv 0$ by the same Liouville-theoretic argument as in Theorem~\ref{thm:cubic-goursat}(ii) (any two of $\widetilde J_0,\widetilde J_1,\widetilde J_2$ being elementary while the third is not would violate Theorem~\ref{thm:liouville-alg}).

\smallskip
\noindent\textbf{(ii)} The decomposition $G=G_0+G_1y+G_2y^2$ is the unique representation of $\C(t,y)$ as a free $\C(t)$-module on the basis $\{1,y,y^2\}$. Multiplying through by $y/y$ in the second and third terms,
\[
  G_1\,y \;=\; \frac{G_1 y \cdot y^2}{y^2} \;=\; \frac{G_1 R}{y^2} \;=\; \frac{G_1 R}{R^{2/3}},
  \qquad
  G_2\,y^2 \;=\; \frac{G_2 y^2 \cdot y}{y} \;=\; \frac{G_2 R}{y} \;=\; \frac{G_2 R}{R^{1/3}},
\]
which is~\eqref{eq:cubic-field-split}. The three pieces are individually elementary or not by the rational case (trivially), Theorem~\ref{thm:cubic-goursat}, and part~(i) respectively, and their sum is elementary if and only if each piece is, again by Liouville's theorem.
\end{proof}

\begin{remark}[A duality between the two exponents]
Comparing~\eqref{eq:H-formula} and~\eqref{eq:Htilde-formula}, the obstructive eigenspace shifts from $V_1$ at exponent $1/3$ to $V_0$ at exponent $2/3$. Geometrically, this reflects which $1$-form of $\C(t,y)$ is of the first kind: on the cube-root elliptic curve $y^3=x(x-K)$, the differential $\d x/y^2$ is the unique (up to scale) holomorphic form, while $\d x/y$ has a pole at infinity. The eigenspace assignment of the obstruction tracks precisely this dichotomy.
\end{remark}

\begin{remark}[Theorem 2 analog]
For $R$ cubic, the full symmetry group of the unordered triple $\{a,b,c\}$ is the symmetric group $S_3$, comprising the cyclic $\Z/3$ subgroup considered above together with three transpositions (M\"obius involutions swapping pairs). Combining the cyclic and transposition characters yields a six-term sum criterion analogous to the Klein four-group condition~\eqref{eq:V4-sum} of Theorem~\ref{thm:goursat2}; we omit the details.
\end{remark}

\section{Worked examples}
\label{sec:examples}

We now illustrate the framework with a sequence of explicit calculations. Throughout, $\omega = e^{2\pi i/3}$.

\subsection{Square-root: a basic Goursat example}

\begin{example}\label{ex:goursat-basic}
Take $R(t) = (t^2-1)(t^2-4)$ and $F(t)=t$. The roots are $\{\pm 1, \pm 2\}$, and the three M\"obius involutions of Lemma~\ref{lem:involutions-pairs} are computed directly:
\[
  S_1(t) = -t,\qquad S_2(t) = 2/t,\qquad S_3(t) = -2/t.
\]
Verification of the $V_4$-sum:
\[
  F + F\!\circ\! S_1 + F\!\circ\! S_2 + F\!\circ\! S_3 \;=\; t + (-t) + \tfrac{2}{t} + \bigl(-\tfrac{2}{t}\bigr) \;=\; 0.
\]
The character decomposition (with the labelling of Theorem~\ref{thm:goursat2}'s proof, $\chi_j(S_j)=+1$) gives
\[
  F^{(1)} = 0,\qquad F^{(2)} = \frac{t^2+2}{2t},\qquad F^{(3)} = \frac{t^2-2}{2t}.
\]
(One checks directly: $F^{(2)}(2/t) = F^{(2)}(t)$ and $F^{(3)}(-2/t) = F^{(3)}(t)$.) Apply Theorem~\ref{thm:goursat1} to $F^{(3)}$ with the involution $S_1(t)=-t$ (fixed points $\alpha=0$, $\beta=\infty$): since $\beta=\infty$ the substitution is just $u=t$, and~\eqref{eq:R-becomes-even} becomes the trivial $R(u) = (u^2-1)(u^2-4)$, already even. Apply $x = u^2$ to obtain
\[
  \int F^{(3)}\,\frac{\d t}{\sqrt{R}} \;=\; \int \frac{t^2-2}{2t}\,\frac{\d t}{\sqrt{R(t)}}
  \;=\; \tfrac{1}{2}\int \frac{(x-2)\,\d x}{2x\sqrt{(x-1)(x-4)}}.
\]
A similar computation handles $F^{(2)}$. Adding the two contributions and simplifying yields
\[
  \int \frac{t\,\d t}{\sqrt{(t^2-1)(t^2-4)}} \;=\; \tfrac{1}{2}\log\!\Bigl(2t^2 - 5 + 2\sqrt{(t^2-1)(t^2-4)}\Bigr) + C,
\]
which one verifies by direct differentiation. \hfill$\square$
\end{example}

\subsection{Cube root: the canonical setting $R(t)=t^3-1$}

For the next three examples, $R(t)=t^3-1$. The roots are $\{1,\omega,\omega^2\}$, and the order-$3$ M\"obius transformation cyclically permuting them is simply $S(t) = \omega t$. The fixed points of $S$ are $\alpha=0,\beta=\infty$, so the substitution $z=(t-\alpha)/(t-\beta)$ degenerates to $z=t$ and no preliminary change of coordinates is required. The constants in~\eqref{eq:R-as-z3-K} are $c=1$, $K=1$, so the canonical radical is $\sqrt[3]{z^3-1}=\sqrt[3]{t^3-1}$.

In this setting, the eigendecomposition of $H(z)=F(z)$ under $z\mapsto\omega z$ is exactly the eigendecomposition of $F$ under multiplication by $\omega$.

\begin{example}[$F=1$: elementary]\label{ex:F1}
$H(z)=1\in V_0$, so $H_1\equiv 0$ and Theorem~\ref{thm:cubic-goursat} predicts elementarity. With $\varphi_0(x)=1$ and $K=1$:
\[
  \int \frac{\d t}{\sqrt[3]{t^3-1}} \;=\; J_0 \;=\; \frac{1}{3}\int \frac{\d x}{\sqrt[3]{x^2(x-1)}}.
\]
Apply the parametrization~\eqref{eq:Y0-param} of $Y_0\colon y^3=x^2(x-1)$:
\[
  x = \frac{1}{1-w^3},\qquad y = \frac{w}{1-w^3},\qquad \frac{\d x}{y} = \frac{3w}{1-w^3}\,\d w.
\]
Hence
\[
  J_0 \;=\; \frac{1}{3}\int \frac{\d x}{y} \;=\; \int \frac{w\,\d w}{1-w^3}.
\]
The remaining integrand has the partial-fraction decomposition
\[
  \frac{w}{1-w^3} \;=\; \frac{w}{(1-w)(1+w+w^2)} \;=\; \frac{A}{1-w} + \frac{Bw+C}{1+w+w^2}
\]
with $A=\tfrac{1}{3}$, $B=\tfrac{1}{3}$, $C=-\tfrac{1}{3}$, giving
\[
  \int \frac{w\,\d w}{1-w^3} \;=\; -\frac{1}{3}\log(1-w) + \frac{1}{6}\log(1+w+w^2) - \frac{1}{\sqrt{3}}\,\arctan\!\Bigl(\frac{2w+1}{\sqrt{3}}\Bigr) + C.
\]
Inverting the parametrization to recover $w$ in terms of $t$: from $x=t^3=1/(1-w^3)$ we get $1-w^3=1/t^3$, hence $w^3 = (t^3-1)/t^3$ and
\[
  w \;=\; \frac{\sqrt[3]{t^3-1}}{t}.
\]
Substituting back yields the explicit antiderivative:
\[
\begin{aligned}
  \int \frac{\d t}{\sqrt[3]{t^3-1}}
   = &-\tfrac{1}{3}\log\!\Bigl(1-\tfrac{\sqrt[3]{t^3-1}}{t}\Bigr)
     + \tfrac{1}{6}\log\!\Bigl(1+\tfrac{\sqrt[3]{t^3-1}}{t} + \tfrac{\sqrt[3]{(t^3-1)^2}}{t^2}\Bigr) \\
   & - \tfrac{1}{\sqrt{3}}\arctan\!\Bigl(\tfrac{1}{\sqrt{3}}\bigl(2\tfrac{\sqrt[3]{t^3-1}}{t} + 1\bigr)\Bigr) + C.
\end{aligned}
\]
One verifies this directly: with $w(t) = \sqrt[3]{t^3-1}/t$, a short calculation gives $\d w/\d t = 1/(t^2(t^3-1)^{2/3})$, while $w/(1-w^3) = w\cdot t^3 = t^2 \sqrt[3]{t^3-1}$, so that the chain rule yields $\d/\d t \int w\,\d w/(1-w^3) = w/(1-w^3)\cdot \d w/\d t = 1/\sqrt[3]{t^3-1}$, as required. \hfill$\square$
\end{example}

\begin{example}[$F=t^2$: trivially elementary]\label{ex:Ft2}
$H(z)=z^2\in V_2$, so $H_0=H_1=0$ and only $J_2$ contributes. With $\varphi_2(x)=1$:
\[
  J_2 \;=\; \frac{1}{3}\int \frac{\d x}{(x-1)^{1/3}}\bigg|_{x=t^3} \;=\; \frac{1}{3}\cdot\frac{3}{2}(x-1)^{2/3}\bigg|_{x=t^3} + C \;=\; \frac{1}{2}(t^3-1)^{2/3} + C.
\]
This recovers the obvious identity $\int t^2\,\d t/\sqrt[3]{t^3-1} = \tfrac{1}{2}\bigl(\sqrt[3]{t^3-1}\bigr)^2 + C$, which one would obtain by inspection (the integrand is, up to a factor of $3$, the derivative of $(t^3-1)^{2/3}$). The eigenvalue framework is consistent with --- and gives a different proof of --- this elementary observation. \hfill$\square$
\end{example}

\begin{example}[$F=t$: \emph{not} elementary]\label{ex:Ft}
$H(z)=z\in V_1$, so $H_0=H_2=0$ and only $J_1$ contributes. With $\varphi_1(x)=1$ and $K=1$:
\[
  \int \frac{t\,\d t}{\sqrt[3]{t^3-1}} \;=\; J_1 \;=\; \frac{1}{3}\int \frac{\d x}{\sqrt[3]{x(x-1)}}.
\]
The right-hand integral is the canonical period integral of the genus-$1$ curve $y^3=x(x-1)$, which is a non-elementary Abelian integral by the discussion in Theorem~\ref{thm:cubic-goursat}(ii).

Concretely, this is a cube-root analog of an incomplete elliptic integral; the corresponding complete period of $y^3 = x(x-1)$ belongs to the family of CM-type periods controlled by the Chowla--Selberg formula~\cite{ChowlaSelberg1967,Gross1978}. \hfill$\square$
\end{example}

The contrast among the three examples~\ref{ex:F1}--\ref{ex:Ft} is striking: $\int \d t/\sqrt[3]{t^3-1}$ is elementary (yet non-trivially so), $\int t^2\,\d t/\sqrt[3]{t^3-1}$ is elementary (and trivially), and $\int t\,\d t/\sqrt[3]{t^3-1}$ is genuinely transcendental --- a definite ``elliptic-like'' integral living on $y^3=x(x-1)$.

\subsection{A combined integrand}

\begin{example}\label{ex:combined}
Take $R(t)=t^3-1$ and $F(t) = 1 + 5t^2 - 7\,t/(t^3+1)$. Decompose $F$:
\[
  F = 1 + 5t^2 + \widetilde{F}(t), \qquad \widetilde{F}(t) := -\tfrac{7t}{t^3+1}.
\]
The first two terms are in $V_0$ and $V_2$ respectively. The third term $\widetilde{F}(t)$ is of the form $t\cdot\varphi(t^3)$ with $\varphi(x)=-7/(x+1)$, hence in $V_1$. By Theorem~\ref{thm:cubic-goursat}, the integral
\[
  \int F\,\frac{\d t}{\sqrt[3]{t^3-1}}
\]
is elementary if and only if $\widetilde{F} \equiv 0$, which it is not. Hence the integral splits as
\[
  \underbrace{\int \frac{\d t}{\sqrt[3]{t^3-1}}}_{\text{Example~\ref{ex:F1}}} + \underbrace{5\int \frac{t^2\,\d t}{\sqrt[3]{t^3-1}}}_{\text{Example~\ref{ex:Ft2}}} - \tfrac{7}{3}\int \frac{\d x}{(x+1)\sqrt[3]{x(x-1)}},
\]
where the last integral is a non-elementary Abelian integral on $Y_1\colon y^3=x(x-1)$. The framework therefore not only detects elementarity but also \emph{quantifies the obstruction} as a specific Abelian integral on a known curve.
\hfill$\square$
\end{example}

\subsection{A non-canonical $S$: shifted cube}

\begin{example}\label{ex:non-canonical-S}
Consider $R(t)=(t-1)(t-2)(t-3)$ and the order-$3$ M\"obius transformation $S$ with $S(1)=2$, $S(2)=3$, $S(3)=1$. Solving the three linear conditions gives
\[
  S(t) = \frac{5t-13}{3t-7}, \qquad \alpha,\beta = 2 \pm \frac{i}{\sqrt 3}.
\]
A direct check confirms $S(1)=-8/-4=2$, $S(2)=-3/-1=3$, $S(3)=2/2=1$, and the fixed points are the roots of $3t^2-12t+13=0$.

The substitution $z=(t-\alpha)/(t-\beta)$ brings $S$ into canonical form $z\mapsto\omega z$ and produces, by Lemma~\ref{lem:cyclic-canonical}, a polynomial $c(z^3-K)$ for explicit $c$ and $K$. One can then test any rational $F$ by computing its eigencomponents in $z$-coordinates. For $F$ chosen to lie entirely in $V_0$ or $V_2$ in the new coordinates, one obtains explicit elementary antiderivatives via the procedure of Theorem~\ref{thm:cubic-goursat}; for $F$ with non-vanishing $V_1$-component, the integral involves a non-elementary period of the genus-$1$ curve $y^3 = x(x-K)$.
\hfill$\square$
\end{example}

\subsection{A degenerate cubic: degree-$2$ radicand}\label{ex:cubic-degree-2}

\begin{example}
We illustrate Corollary~\ref{cor:degree2-nonelementary} concretely with $R(t) = t^2 - 1$, viewed as a cubic with one root at $\infty$ (cf.\ Remark~\ref{rmk:degree-2}). Here $r_1 = 1$, $r_2 = -1$, so the formula of Remark~\ref{rmk:degree-2}(b) gives
\[
  S(t) \;=\; \frac{1\cdot t - (1 + 1 + 1)}{t - (-1)} \;=\; \frac{t-3}{t+1},
\]
with fixed-point equation $t^2 + 3 = 0$ and so $\alpha = i\sqrt{3}$, $\beta = -i\sqrt{3}$. A direct check confirms $S(1) = -1$, $S(-1) = \infty$, $S(\infty) = 1$.

Substituting $z = (t-\alpha)/(t-\beta)$ with inverse $t(z) = i\sqrt{3}\,(1+z)/(1-z)$:
\[
  R(t(z))\,(1-z)^3 \;=\; \bigl[t(z)^2 - 1\bigr](1-z)^3 \;=\; 4(z^3 - 1),
\]
confirming $K = 1$ as predicted by Remark~\ref{rmk:degree-2}(a), with $c = 4$. The function~\eqref{eq:H-formula} for $F = 1$ is
\[
  H(z) \;=\; \frac{2i\sqrt{3}}{\sqrt[3]{4}\,(1-z)} \;=\; \frac{c_0}{1-z},\qquad c_0 := \frac{2i\sqrt{3}}{\sqrt[3]{4}}\neq 0.
\]
By Corollary~\ref{cor:degree2-nonelementary},
\[
  \int \frac{\d t}{\sqrt[3]{t^2-1}} \quad\text{is \emph{not} elementary.}
\]

By Theorem~\ref{thm:cubic-goursat}(ii), the obstruction takes the explicit form
\[
  J_1 \;=\; \frac{1}{3}\int \frac{\varphi_1(x)\,\d x}{\sqrt[3]{x(x-1)}},\qquad \varphi_1(x) \;=\; \frac{c_0}{1-x},
\]
a non-elementary Abelian integral on the cube-root elliptic curve $y^3 = x(x-1)$. The integrals $J_0$ and $J_2$ contribute elementary pieces (with $\arctan$ and $\log$ terms via the parametrization~\eqref{eq:Y0-param} and the trivial reduction for $J_2$), but they cannot cancel the transcendental $J_1$.

The contrast with the square-root case $\int\d t/\sqrt{t^2-1} = \operatorname{arccosh}(t) + C$ is geometric: the curve $y^2 = t^2-1$ has genus $0$ (it is a smooth conic), but $y^3 = t^2-1$ has genus $1$. \hfill$\square$
\end{example}

\section{Concluding remarks}
\label{sec:remarks}

\subsection*{Higher radicals.}
Theorem~\ref{thm:cubic-goursat} extends to $n$-th roots, considering integrands $\int F(t)\,\d t/\sqrt[n]{R(t)}$ together with order-$n$ M\"obius transformations $S$ cyclically permuting roots of $R$. After the substitutions $z = (t-\alpha)/(t-\beta)$ and $x = z^n$, the differential decomposes into $n$ eigencomponents under $z\mapsto\zeta z$ (with $\zeta = e^{2\pi i/n}$), the $k$-th of which lands on the curve
\[
  Y_k\colon\; y^n \;=\; x^{n-1-k}(x-K), \qquad k = 0, 1, \dots, n-1.
\]
A direct application of Riemann--Hurwitz to the projection $Y_k\to\mathbb{P}^1$, $(x,y)\mapsto x$, gives
\begin{equation}\label{eq:gen-n-genus}
  g(Y_k) \;=\; \frac{n+1\;-\;\gcd(n,k)\;-\;\gcd(n,k+1)}{2}.
\end{equation}
Since $\gcd(n,k)$ and $\gcd(n,k+1)$ are coprime divisors of $n$ (as $k$ and $k+1$ are coprime), their sum equals $n+1$ if and only if one of them equals $n$, which occurs precisely for $k = 0$ or $k = n-1$. Therefore
\[
  g(Y_k) = 0 \;\iff\; k\in\{0, n-1\},
\]
and every interior $k\in\{1,\dots,n-2\}$ gives a curve of strictly positive genus. The two outermost eigencomponents always descend to elementary integrals --- via the rational parametrizations $x = K/(1-w^n)$ and $x = K + w^n$ for $Y_0$ and $Y_{n-1}$ respectively --- while the interior components are generically transcendental.

\medskip
For small $n$:

\smallskip\noindent
$\bullet$ \emph{$n=2$ (no interior $k$).} Both eigencomponents descend to genus-$0$ curves; recovering Goursat's theorem (Theorem~\ref{thm:goursat1}).

\smallskip\noindent
$\bullet$ \emph{$n=3$ (one interior $k=1$).} The unique obstruction is $Y_1\colon y^3 = x(x-K)$ of genus $1$, the cube-root analog of a Weierstrass elliptic curve --- this is Theorem~\ref{thm:cubic-goursat}.

\smallskip\noindent
$\bullet$ \emph{$n=4$ (two interior $k\in\{1,2\}$, both genus $1$).} A particularly clean rigidity emerges. For $k=1$, the substitution $u = y^2/x$ on $Y_1\colon y^4 = x^2(x-K)$ gives $u^2 = x-K$, hence $x = u^2 + K$ and
\[
  y^2 \;=\; u\cdot x \;=\; u(u^2 + K) \;=\; u^3 + Ku,
\]
the \emph{lemniscate elliptic curve} with $j$-invariant $1728$. For $k=2$, $Y_2\colon y^4 = x(x-K)$ carries the order-$4$ automorphism $y\mapsto iy$, which descends to an order-$4$ automorphism of its elliptic quotient and therefore forces $j = 1728$. The two obstructions thus both reduce, over $\C$, to the unique elliptic curve with complex multiplication by $\Z[i]$ --- the rigidity is forced by the underlying $\Z/4$-symmetry of the cover.

\smallskip\noindent
$\bullet$ \emph{$n=5$ (three interior $k\in\{1,2,3\}$, all genus $2$).} By~\eqref{eq:gen-n-genus} all three interior curves $Y_1, Y_2, Y_3$ have genus $2$, giving three independent transcendental obstructions on hyperelliptic curves.

\smallskip\noindent
$\bullet$ \emph{$n\geq 6$.} The genera of the $Y_k$ vary non-monotonically with $k$ and the obstruction curves are no longer pairwise isomorphic. For example at $n = 6$ the interior values $k\in\{1,4\}$ give genus $2$ and $k\in\{2,3\}$ give genus $1$ (cube-root and square-root elliptic respectively), a mixed picture lacking the rigidity present at $n\in\{3,4\}$.

\subsection*{Theorem 2 analog for $n=3$.}
For $R$ cubic, the full automorphism group of the unordered triple of roots is $S_3$, with character table comprising the trivial, sign, and $2$-dimensional irreducible representations. A complete cube-root analog of Theorem~\ref{thm:goursat2} should sum over all of $S_3$ (including transpositions), and the obstruction --- which corresponds in the cyclic case to a single eigenspace --- now correponds to the $2$-dimensional irreducible component. Working this out and matching to the cube-root analog of the Klein four-group decomposition is straightforward but notationally heavier.

\subsection*{Algorithmic perspective.}
The Risch--Trager--Bronstein algorithm~\cite{Risch1969,Risch1970,Davenport1981,Trager1984,Bronstein1990,Bronstein2005} provides, in principle, a uniform decision procedure for elementary integration over algebraic function fields, including cube-radical integrands; in this sense it subsumes both Goursat's theorem and our cube-root analog. In practice, however, the algebraic case of the Risch algorithm is extremely complex and has never been implemented in fully general form in any computer algebra system (see also the discussion in~\cite{Blake2020}); even the FriCAS implementation~\cite{Trager1984}, which remains the most complete, is known to be unexpectedly slow on inputs of modest size and to fail outright on a variety of integrands handled effortlessly by classical methods --- including a number of Goursat-type pseudo-elliptics noted in the recent literature and on \texttt{sci.math.symbolic}~\cite{WelzNewsgroup}. Moreover, the Risch--Trager--Bronstein output, when produced, is usually opaque: it expresses an antiderivative in a specific algebraic form without revealing the underlying geometric structure. The reductions of Theorems~\ref{thm:goursat1} and~\ref{thm:cubic-goursat} are valuable precisely because they preserve and explain the structure, run in polynomial time in the degrees of $R$ and $F$, and often produce strictly simpler closed forms than the algorithmic output.

\subsection*{Differential Galois theory.}
At a higher level, the question of which integrals are elementary is governed by the differential Galois group of the relevant Picard--Vessiot extension; we refer to the standard texts of Magid~\cite{Magid1994}, van der Put--Singer~\cite{vanderPutSinger2003}, and Crespo--Hajto~\cite{CrespoHajto2011} for the algebraic theory, and to Khovanskii~\cite{Khovanskii2014} for a topological viewpoint that includes Liouville's results on solvability in finite terms among its main applications. For curves of positive genus, the Galois-theoretic obstructions to elementarity are governed by the Albanese variety and the structure of the divisor class group~\cite{Bertrand1990}. Goursat's theorem and its cube-root analog can be viewed as explicit calculations of these obstructions in the presence of additional curve automorphisms, and as concrete instances of the general principle that the more symmetry a curve has, the more likely a given Abelian integral is to degenerate to elementary functions.

\appendix
\section{Mathematica implementation}\label{sec:appendix}

We provide a self-contained \emph{Mathematica} package implementing the diagnostic algorithms of Theorems~\ref{thm:goursat2}, \ref{thm:cubic-goursat}, and Corollary~\ref{cor:two-thirds}. The entry point is
\[
  \texttt{IntegrateGoursat[F, R, t, p]},\qquad p\in\{\tfrac{1}{2},\tfrac{1}{3},\tfrac{2}{3}\},
\]
which takes a rational integrand $F\in\C(t)$, a polynomial $R\in\C[t]$ with simple roots, and an exponent $p$ specifying that the integrand is $F(t)/R(t)^p$. The package computes the relevant character or eigenspace projections and returns a Mathematica association keyed by \texttt{"status"}. Two outcomes are possible:
\begin{itemize}
  \item \texttt{"status"\,$\to$\,"elementary"}: the projection criterion of the relevant theorem is satisfied. The returned association contains the explicit reduction to a rational integrand --- via Theorem~\ref{thm:goursat1} for $p=\tfrac{1}{2}$ (yielding $\int G(x)\,\d x/\sqrt{Q(x)}$ with $Q$ quadratic), via the parametrizations of $Y_0$ and $Y_2$ for $p=\tfrac{1}{3}$ (yielding rational integrands in $w$ and $u$), and via the parametrizations of $\widetilde Y_1$ and $\widetilde Y_2$ for $p=\tfrac{2}{3}$ (yielding rational integrands in $s$ and $u$).
  \item \texttt{"status"\,$\to$\,"obstructed"}: the projection criterion fails, and the obstruction --- a non-zero rational function in the V-isotypic subspace identified by the relevant theorem --- is returned. Specifically, $F^{(0)}$ for $p=\tfrac{1}{2}$, $H_1$ for $p=\tfrac{1}{3}$, or $\widetilde H_0$ for $p=\tfrac{2}{3}$. By Theorems~\ref{thm:goursat2}, \ref{thm:cubic-goursat}(ii), and Corollary~\ref{cor:two-thirds}(ii), this verdict implies non-elementarity \emph{whenever the descended differential on the appropriate obstruction curve is of the second kind} --- a sufficient condition that holds in particular when $F$ is a polynomial in $t$, and more generally when the rational function $\varphi_k$ (or $\widetilde\varphi_k$) extracted from the obstruction has poles only at branch points of the cube-root cover. For $F$ with poles at non-branch points of the cube-root cover, the obstructed status is necessary for non-elementarity but not sufficient: a complete Risch--Trager--Bronstein analysis~\cite{Trager1984,Bronstein2005} would be required to decide elementarity definitively. In practice, for the polynomial-$F$ inputs that motivate the development, the obstructed verdict is a complete certificate of non-elementarity.
\end{itemize}

Crucially, the diagnostic phase never invokes Mathematica's \texttt{Integrate}: every step is the constructive procedure from the proofs of Theorems~\ref{thm:goursat1}--\ref{thm:cubic-goursat} and Corollary~\ref{cor:two-thirds}, performed symbolically in rational arithmetic. The auto-extracting two-argument form \texttt{IntegrateGoursat[integrand, t]} (described below) does call \texttt{Integrate} recursively, but only on the rational reductions produced when the verdict is \texttt{"elementary"}.

The code mirrors the proofs directly. The square-root branch (\texttt{GoursatTest}) computes the three M\"obius involutions of Lemma~\ref{lem:involutions-pairs} via the explicit formula~\eqref{eq:involution-formula}, then the four $V_4$-character projections from the proof of Theorem~\ref{thm:goursat2}; for each non-zero non-trivial projection $F^{(j)}$ ($j\in\{1,2,3\}$) it picks an involution $S_k$ ($k\neq j$) under which $F^{(j)}$ is anti-invariant (preferring one whose fixed-point pair contains $\infty$, which gives the simplest substitution) and applies the constructive proof of Theorem~\ref{thm:goursat1} to produce the reduction $((\alpha-\beta)/(2\sqrt{c}))\int G(u)\,\d u/\sqrt{Q(u)}$ explicitly, where $u$ is a substitution variable supplied by the caller. As noted in the proof of Theorem~\ref{thm:goursat2}, the cubic case is handled uniformly by treating $\infty$ as the fourth ramification point: when \texttt{Solve[R == 0, t]} returns three finite roots, the package appends \texttt{Infinity} so that the $V_4$ of involutions is constructed as in the quartic case. The implementation of \texttt{MobiusInvolution} dispatches to a closed-form expression $S(t) = (c\,t + ab - c(a+b))/(t-c)$ (the limit of the general formula~\eqref{eq:involution-formula} as one of the four arguments tends to infinity) when called with \texttt{Infinity} in any slot, and applies \texttt{canonic} --- a thin wrapper for \texttt{Cancel[Together[\dots, Extension -> Automatic], Extension -> Automatic]} --- throughout to carry out simplifications across the algebraic-number extensions generated by the roots of $R$.

The package exposes two entry points. The five-argument form \texttt{IntegrateGoursat[F, R, t, p, u]} (with $p$ defaulting to $1/2$) returns the diagnostic Association described above, leaving the rational reductions in $u$ for the caller to inspect or integrate. The two-argument form \texttt{IntegrateGoursat[integrand, t]} parses an integrand of the form $F(t)\,R(t)^{-p}$ via \texttt{ParseGoursatIntegrand}, dispatches to the appropriate test, and --- when the verdict is \texttt{"elementary"} --- recursively invokes Mathematica's \texttt{Integrate} on each rational reduction and back-substitutes through the explicit substitution chain (recorded in the \texttt{"back"} or \texttt{"J0back"}/\texttt{"J2back"} fields of the diagnostic output) to produce a closed-form antiderivative in $t$. The back-substitution rules follow from the canonical-form lemmas: for $p = 1/2$ the chain is $u_{\text{out}} = ((t-\alpha)/(t-\beta))^2$; for $p = 1/3$, $u_{\text{out}}^{\,J_0} = R(t)^{1/3}(\alpha-\beta)/(c^{1/3}(t-\alpha))$ and $u_{\text{out}}^{\,J_2} = R(t)^{1/3}(\alpha-\beta)/(c^{1/3}(t-\beta))$, with the corresponding $\beta = \infty$ specialisations; the $p = 2/3$ case is dual via $u_{\text{out}}^{\,J_1} = c^{1/3}(t-\alpha)/(R(t)^{1/3}(\alpha-\beta))$, the reciprocal of the $J_0$ rule. Every back-substitution is the composition of (i) the M\"obius substitution $z = (t-\alpha)/(t-\beta)$, (ii) the eigenspace coordinate $x = z^3$ (or $x = z^2$ for $p = 1/2$), and (iii) the rational parametrisation of the relevant genus-zero curve $Y_k$.

The cube-root branches share their setup --- the order-$3$ M\"obius transformation of Lemma~\ref{lem:cyclic-mobius}, the change of variable $z=(t-\alpha)/(t-\beta)$, and the projection formula~\eqref{eq:projection-Vk} --- and differ only in which function is decomposed and which eigenspace obstructs. For exponent $1/3$ (\texttt{CubicTest}), the relevant function is $H$ from~\eqref{eq:H-formula}, the obstruction lives in $V_1$, and the reductions land on $Y_0$ and $Y_2$ via the parametrizations $x=K/(1-w^3)$ and $u=\sqrt[3]{x-K}$. For exponent $2/3$ (\texttt{CubicTest23}), the function is $\widetilde H$ from~\eqref{eq:Htilde-formula}, the obstruction lives in $V_0$, and the reductions land on $\widetilde Y_1$ and $\widetilde Y_2$ via the parametrizations $x=Ks^3/(s^3-1)$ and $u=\sqrt[3]{x-K}$. The case $\beta=\infty$ --- which arises whenever $0$ or $\infty$ is fixed by $S$, e.g.\ for $R(t)=t^3-1$ with $S(t)=\omega t$ --- is handled in both branches by the explicit closed form for the cyclic M\"obius transformation derived in Remark~\ref{rmk:degree-2}.

\subsection*{Source code}

\begin{lstlisting}
(* === GoursatAppendix.m ===
   Diagnostic for pseudo-elementarity of integrals
       Integrate[F[t]/R[t]^p, t]
   for p in {1/2, 1/3, 2/3}.  Reductions are constructive (Theorems
   3.3, 3.5, 4.3, Cor. 4.8); the auto-extracting entry point
   IntegrateGoursat[integrand, t] additionally calls Mathematica's
   Integrate on the rational reductions to produce a closed-form
   antiderivative in t.
*)

(* === Helper: canonicalise across algebraic-number extensions === *)
canonic[e_] :=
  Cancel[Together[e, Extension -> Automatic], Extension -> Automatic];

(* === Square-root case (Theorems 3.3 and 3.5) === *)

(* Mobius involution swapping pairs {a,b} and {c,d}.  Handles Infinity *)
(* in any slot by reducing to the canonical case d -> oo. *)
MobiusInvolution[{a_, b_}, {c_, d_}, t_] :=
  Which[
    a === Infinity, MobiusInvolution[{c, d}, {b, a}, t],
    b === Infinity, MobiusInvolution[{c, d}, {a, b}, t],
    c === Infinity, MobiusInvolution[{a, b}, {d, c}, t],
    d === Infinity, canonic[(c t + a b - c (a + b))/(t - c)],
    True, canonic[((a b - c d) t + (a + b) c d - (c + d) a b)/
                  (((a + b) - (c + d)) t - (a b - c d))]
  ];

V4Projections[F_, {S1_, S2_, S3_}, t_] :=
  Module[{f, f1, f2, f3},
    {f, f1, f2, f3} = {F, F /. t -> S1, F /. t -> S2, F /. t -> S3};
    canonic /@ {(f + f1 + f2 + f3)/4, (f + f1 - f2 - f3)/4,
                (f - f1 + f2 - f3)/4, (f - f1 - f2 + f3)/4}];

ToFunctionOfSquare[H_, u_, x_] :=
  If[PossibleZeroQ[Together[H]], 0,
    Module[{num, den, dN, dD},
      {num, den} = Through[{Numerator, Denominator}[canonic[H]]];
      dN = Exponent[num, u]; dD = Exponent[den, u];
      Sum[Coefficient[num, u, 2 k] x^k, {k, 0, dN/2}]/
      Sum[Coefficient[den, u, 2 k] x^k, {k, 0, dD/2}]]];

(* Reduction: takes uOut as the user-facing output variable; *)
(* the intermediate Mobius substitution variable is local. *)
GoursatReduction[Fj_, R_, t_, S_, uOut_] :=
  Module[{u, fps, alpha, beta, tu, Rfact, lc, Q, Fu, gu, gx, pre, back},
    fps = t /. Solve[S == t, t];
    {alpha, beta} = If[Length[fps] == 1, {fps[[1]], Infinity}, fps];
    tu = If[beta === Infinity, alpha + u, (alpha - beta u)/(1 - u)];
    Rfact = If[beta === Infinity, Expand[R /. t -> tu],
               Expand[canonic[(R /. t -> tu) (1 - u)^4]]];
    lc = Coefficient[Rfact, u, 4];
    Q = Sum[Coefficient[Rfact, u, 2 k]/lc uOut^k, {k, 0, 2}];
    Fu = canonic[Fj /. t -> tu]; gu = canonic[Fu/u];
    gx = ToFunctionOfSquare[gu, u, uOut];
    pre = If[beta === Infinity, 1, alpha - beta]/(2 Sqrt[lc]);
    back = If[beta === Infinity, uOut -> (t - alpha)^2,
                                 uOut -> ((t - alpha)/(t - beta))^2];
    <|"S" -> S, "alpha" -> alpha, "beta" -> beta,
      "prefactor" -> pre, "G" -> gx, "Q" -> Q, "back" -> back|>];

PickAntiInvolution[j_, S_, t_] :=
  Module[{cands, withInf},
    cands = Complement[{1, 2, 3}, {j}];
    withInf = Select[cands,
      Length[t /. Solve[S[[#]] == t, t]] == 1 &];
    S[[If[Length[withInf] > 0, First[withInf], First[cands]]]]];

GoursatTest[F_, R_, t_, uOut_] :=
  Module[{roots, S, p, reds},
    roots = DeleteDuplicates[t /. Solve[R == 0, t]];
    (* The cubic case is identical with r_4 = Infinity. *)
    If[Length[roots] == 3, AppendTo[roots, Infinity]];
    If[Length[roots] != 4, Return[$Failed]];
    S = With[{r = roots},
      {MobiusInvolution[r[[{1, 2}]], r[[{3, 4}]], t],
       MobiusInvolution[r[[{1, 3}]], r[[{2, 4}]], t],
       MobiusInvolution[r[[{1, 4}]], r[[{2, 3}]], t]}];
    p = V4Projections[F, S, t];
    If[!PossibleZeroQ[p[[1]]],
      Return[<|"status" -> "obstructed",
              "involutions" -> S, "obstruction" -> p[[1]]|>]];
    reds = Association @ Table[
      If[!PossibleZeroQ[p[[j + 1]]],
         j -> GoursatReduction[p[[j + 1]], R, t,
                PickAntiInvolution[j, S, t], uOut],
         Nothing], {j, 1, 3}];
    <|"status" -> "elementary", "involutions" -> S,
      "F0" -> p[[1]], "F1" -> p[[2]],
      "F2" -> p[[3]], "F3" -> p[[4]],
      "reductions" -> reds|>];

(* === Cube-root case (Theorem 4.3, Corollary 4.8) === *)

CyclicMobius[roots_List, t_] :=
  Module[{r, pos, A, B, C, sol},
    r = roots;
    pos = FirstPosition[r, Infinity, {0}, 1][[1]];
    If[pos > 0,
      r = RotateLeft[r, pos];
      Together[(r[[1]] t -
        (r[[1]]^2 - r[[1]] r[[2]] + r[[2]]^2))/(t - r[[2]])],
      sol = First[Solve[
        {(A r[[1]] + B)/(C r[[1]] + 1) == r[[2]],
         (A r[[2]] + B)/(C r[[2]] + 1) == r[[3]],
         (A r[[3]] + B)/(C r[[3]] + 1) == r[[1]]}, {A, B, C}]];
      canonic[(A t + B)/(C t + 1) /. sol]]];

EigenProjection[H_, z_, k_Integer] :=
  With[{w = Exp[2 Pi I/3]},
    canonic[Simplify[(H + w^(-k) (H /. z -> w z) +
      w^(-2 k) (H /. z -> w^2 z))/3]]];

ToFunctionOfCube[H_, z_, x_] :=
  If[PossibleZeroQ[Together[H]], 0,
    Module[{num, den, dN, dD},
      {num, den} = Through[{Numerator, Denominator}[canonic[H]]];
      dN = Exponent[num, z]; dD = Exponent[den, z];
      Sum[Coefficient[num, z, 3 k] x^k, {k, 0, dN/3}]/
      Sum[Coefficient[den, z, 3 k] x^k, {k, 0, dD/3}]]];

CubicTest[F_, R_, t_, uOut_] :=
  Module[{roots, S, fps, alpha, beta, z, x, tz, Rz, cval, K,
          H, H0, H1, H2, phi0, psi2, J0, J2, J0back, J2back},
    roots = DeleteDuplicates[t /. Solve[R == 0, t]];
    If[Length[roots] == 2, AppendTo[roots, Infinity]];
    If[Length[roots] != 3, Return[$Failed]];
    S = CyclicMobius[roots, t];
    fps = t /. Solve[S == t, t];
    {alpha, beta} = If[Length[fps] == 1, {fps[[1]], Infinity}, fps];
    tz = If[beta === Infinity, alpha + z, (alpha - beta z)/(1 - z)];
    Rz = If[beta === Infinity, canonic[R /. t -> tz],
            canonic[(R /. t -> tz) (1 - z)^3]];
    cval = Coefficient[Expand[Rz], z, 3];
    K = -Coefficient[Expand[Rz], z, 0]/cval;
    H = Together[If[beta === Infinity,
         (F /. t -> tz)/cval^(1/3),
         (alpha - beta) (F /. t -> tz)/(cval^(1/3) (1 - z))]];
    {H0, H1, H2} = EigenProjection[H, z, #] & /@ {0, 1, 2};
    If[!PossibleZeroQ[Simplify[H1]],
      Return[<|"status" -> "obstructed", "S" -> S,
              "alpha" -> alpha, "beta" -> beta, "K" -> K,
              "obstruction" -> H1|>]];
    phi0 = ToFunctionOfCube[H0, z, x];
    psi2 = ToFunctionOfCube[Together[H2/z^2], z, x];
    J0 = Together[(phi0 /. x -> K/(1 - uOut^3)) uOut/(1 - uOut^3)];
    J2 = Together[(psi2 /. x -> uOut^3 + K) uOut];
    J0back = If[beta === Infinity,
                uOut -> R^(1/3)/(cval^(1/3) (t - alpha)),
                uOut -> R^(1/3) (alpha - beta)/(cval^(1/3) (t - alpha))];
    J2back = If[beta === Infinity,
                uOut -> R^(1/3)/cval^(1/3),
                uOut -> R^(1/3) (alpha - beta)/(cval^(1/3) (t - beta))];
    <|"status" -> "elementary", "S" -> S,
      "alpha" -> alpha, "beta" -> beta, "K" -> K, "c" -> cval,
      "H0" -> H0, "H2" -> H2,
      "J0integrand" -> J0, "J2integrand" -> J2,
      "J0back" -> J0back, "J2back" -> J2back|>];

CubicTest23[F_, R_, t_, uOut_] :=
  Module[{roots, S, fps, alpha, beta, z, x, tz, Rz, cval, K,
          Htilde, H0, H1, H2, phi1, phi2, J1, J2, J1back, J2back},
    roots = DeleteDuplicates[t /. Solve[R == 0, t]];
    If[Length[roots] == 2, AppendTo[roots, Infinity]];
    If[Length[roots] != 3, Return[$Failed]];
    S = CyclicMobius[roots, t];
    fps = t /. Solve[S == t, t];
    {alpha, beta} = If[Length[fps] == 1, {fps[[1]], Infinity}, fps];
    tz = If[beta === Infinity, alpha + z, (alpha - beta z)/(1 - z)];
    Rz = If[beta === Infinity, canonic[R /. t -> tz],
            canonic[(R /. t -> tz) (1 - z)^3]];
    cval = Coefficient[Expand[Rz], z, 3];
    K = -Coefficient[Expand[Rz], z, 0]/cval;
    Htilde = Together[If[beta === Infinity,
         (F /. t -> tz)/cval^(2/3),
         (alpha - beta) (F /. t -> tz)/cval^(2/3)]];
    {H0, H1, H2} = EigenProjection[Htilde, z, #] & /@ {0, 1, 2};
    If[!PossibleZeroQ[Simplify[H0]],
      Return[<|"status" -> "obstructed", "S" -> S,
              "alpha" -> alpha, "beta" -> beta, "K" -> K,
              "obstruction" -> H0|>]];
    phi1 = ToFunctionOfCube[Together[H1/z], z, x];
    phi2 = ToFunctionOfCube[Together[H2/z^2], z, x];
    J1 = Together[-(phi1 /. x -> K uOut^3/(uOut^3 - 1)) uOut/(uOut^3 - 1)];
    J2 = Together[phi2 /. x -> uOut^3 + K];
    J1back = If[beta === Infinity,
                uOut -> cval^(1/3) (t - alpha)/R^(1/3),
                uOut -> cval^(1/3) (t - alpha)/(R^(1/3) (alpha - beta))];
    J2back = If[beta === Infinity,
                uOut -> R^(1/3)/cval^(1/3),
                uOut -> R^(1/3) (alpha - beta)/(cval^(1/3) (t - beta))];
    <|"status" -> "elementary", "S" -> S,
      "alpha" -> alpha, "beta" -> beta, "K" -> K, "c" -> cval,
      "H1" -> H1, "H2" -> H2,
      "J1integrand" -> J1, "J2integrand" -> J2,
      "J1back" -> J1back, "J2back" -> J2back|>];

(* === Unified entry point with substitution variable === *)

IntegrateGoursat[F_, R_, t_, p_ : 1/2, uOut_] :=
  Switch[p, 1/2, GoursatTest[F, R, t, uOut],
            1/3, CubicTest[F, R, t, uOut],
            2/3, CubicTest23[F, R, t, uOut],
            _, "Only p = 1/2, 1/3, 2/3 are supported"];

(* === Auto-extracting entry point: parses F, R, p from the      *)
(*     integrand and computes a closed-form antiderivative in t  *)
(*     by recursively invoking Mathematica's Integrate.          *)

ParseGoursatIntegrand[expr_, t_] :=
  Replace[expr, {
    f_. * Power[r_, q_]
      /; PolynomialQ[r, t] && MemberQ[{-1/2, -1/3, -2/3}, q] :>
        {canonic[f], r, -q},
    Power[r_, q_]
      /; PolynomialQ[r, t] && MemberQ[{-1/2, -1/3, -2/3}, q] :>
        {1, r, -q},
    _ :> $Failed}];

ComputeAntiderivative12[result_, t_, uOut_] :=
  Module[{total = 0},
    Do[With[{red = result["reductions"][j]},
      total = total + red["prefactor"] *
        (Integrate[red["G"]/Sqrt[red["Q"]], uOut] /. red["back"])],
      {j, Keys[result["reductions"]]}];
    total];

ComputeAntiderivative13[result_, t_, uOut_] :=
  Module[{total = 0, J0 = result["J0integrand"], J2 = result["J2integrand"]},
    If[!PossibleZeroQ[J0],
      total = total + (Integrate[J0, uOut] /. result["J0back"])];
    If[!PossibleZeroQ[J2],
      total = total + (Integrate[J2, uOut] /. result["J2back"])];
    total];

ComputeAntiderivative23[result_, t_, uOut_] :=
  Module[{total = 0, J1 = result["J1integrand"], J2 = result["J2integrand"]},
    If[!PossibleZeroQ[J1],
      total = total + (Integrate[J1, uOut] /. result["J1back"])];
    If[!PossibleZeroQ[J2],
      total = total + (Integrate[J2, uOut] /. result["J2back"])];
    total];

IntegrateGoursat[integrand_, t_] :=
  Module[{parsed, F, R, p, uOut, result},
    parsed = ParseGoursatIntegrand[integrand, t];
    If[parsed === $Failed, Return[$Failed]];
    {F, R, p} = parsed;
    uOut = Unique["u"];
    result = IntegrateGoursat[F, R, t, p, uOut];
    If[!AssociationQ[result] || result["status"] =!= "elementary",
      Return[result]];
    Switch[p,
      1/2, ComputeAntiderivative12[result, t, uOut],
      1/3, ComputeAntiderivative13[result, t, uOut],
      2/3, ComputeAntiderivative23[result, t, uOut]]];
\end{lstlisting}

\subsection*{Sample sessions}

We illustrate the package on four examples drawn from the body of the paper.

\paragraph{Example~\ref{ex:goursat-basic} (square-root, elementary).} For $F=t$ and $R=(t^2-1)(t^2-4)$:
\begin{lstlisting}[basicstyle=\footnotesize\ttfamily,frame=none]
In[]:= IntegrateGoursat[t, (t^2 - 1)(t^2 - 4), t, 1/2, u]
\end{lstlisting}
returns \texttt{"status"\,$\to$\,"elementary"} with $F^{(0)} = 0$ (verifying the $V_4$-criterion), $F^{(1)} = 0$, and the two non-zero non-trivial projections $F^{(2)} = (t^2+2)/(2t)$ and $F^{(3)} = (t^2-2)/(2t)$. The \texttt{"reductions"} field contains, indexed by $j\in\{2,3\}$, the data of Theorem~\ref{thm:goursat1} applied to $F^{(j)}$ with the involution $S = -t$ (preferred over the alternative anti-invariance involution because it has $\beta=\infty$, simplifying the substitution to $u=t$):
\[
  \text{prefactor} = \tfrac{1}{2},\quad
  G_2(x) = \frac{x+2}{2x},\quad G_3(x) = \frac{x-2}{2x},\quad
  Q(x) = (x-1)(x-4).
\]
Summing the two contributions and integrating the elementary integrals $\int G_j(x)\,\d x/\sqrt{Q(x)}$ recovers the closed form derived by hand in Example~\ref{ex:goursat-basic}.

\paragraph{Example~\ref{ex:F1} (cube-root, elementary).} For $F=1$, $R=t^3-1$:
\begin{lstlisting}[basicstyle=\footnotesize\ttfamily,frame=none]
In[]:= IntegrateGoursat[1, t^3 - 1, t, 1/3, u]
\end{lstlisting}
returns \texttt{"status"\,$\to$\,"elementary"}. The package detects $S(t) = \omega t$, $\alpha=0$, $\beta=\infty$, $K=1$, and $H(z) = 1 \in V_0$, so $H_0 = 1$, $H_2 = 0$. The rational reductions are
\[
  \text{J0integrand}(w) = \frac{w}{1-w^3},\qquad \text{J2integrand}(u) = 0,
\]
so $J_2 \equiv 0$ and $J_0 = \int w\,\d w/(1-w^3)$ is a genuinely rational integral, evaluable by partial fractions, recovering the closed form of Example~\ref{ex:F1}.

\paragraph{Example~\ref{ex:Ft} (cube-root, obstructed; non-elementary because $F$ is polynomial).} For $F=t$, $R=t^3-1$:
\begin{lstlisting}[basicstyle=\footnotesize\ttfamily,frame=none]
In[]:= IntegrateGoursat[t, t^3 - 1, t, 1/3, u]
\end{lstlisting}
returns \texttt{"status"\,$\to$\,"obstructed"} with $K=1$ and obstruction $H_1 = z$, exactly as predicted by Example~\ref{ex:Ft}.

\paragraph{Corollary~\ref{cor:degree2-nonelementary} (cube-root, degree-$2$ radicand).} For $F=1$, $R=t^2-1$:
\begin{lstlisting}[basicstyle=\footnotesize\ttfamily,frame=none]
In[]:= IntegrateGoursat[1, t^2 - 1, t, 1/3, u]
\end{lstlisting}
returns \texttt{"status"\,$\to$\,"obstructed"}. Internally the package computes $S(t)=(t-3)/(t+1)$, fixed points $\alpha=i\sqrt{3}$, $\beta=-i\sqrt{3}$, the canonical-form constants $c=4$, $K=1$, and the obstruction $H_1 = c_0 z/(1-z^3)$ with $c_0 = 2i\sqrt{3}/\sqrt[3]{4}$, matching the explicit eigendecomposition derived in the proof of Corollary~\ref{cor:degree2-nonelementary} via the geometric-series identity $1/(1-z) = (1+z+z^2)/(1-z^3)$.

\paragraph{Corollary~\ref{cor:two-thirds} (the duality at exponent $2/3$).} The same input $F=1$, $R=t^3-1$ that was elementary at exponent $1/3$ is non-elementary at exponent $2/3$:
\begin{lstlisting}[basicstyle=\footnotesize\ttfamily,frame=none]
In[]:= IntegrateGoursat[1, t^3 - 1, t, 2/3, u]
\end{lstlisting}
returns \texttt{"status"\,$\to$\,"obstructed"} with obstruction $\widetilde H_0 = 1$. Conversely, $F=t$ that was non-elementary at $1/3$ is elementary at $2/3$:
\begin{lstlisting}[basicstyle=\footnotesize\ttfamily,frame=none]
In[]:= IntegrateGoursat[t, t^3 - 1, t, 2/3, u]
\end{lstlisting}
returns \texttt{"status"\,$\to$\,"elementary"} with $\widetilde H_1 = z$, $\widetilde H_2 = 0$, $\varphi_1 = 1$, and rational reductions
\[
  \text{J1integrand}(s) = -\frac{s}{s^3-1},\qquad \text{J2integrand}(u) = 0,
\]
giving $\int t\,\d t/(t^3-1)^{2/3} = -\int s\,\d s/(s^3-1)$ via the parametrization $s = t/(t^3-1)^{1/3}$. This is the duality of Corollary~\ref{cor:two-thirds}: the obstructive eigenspace migrates from $V_1$ at exponent $1/3$ to $V_0$ at exponent $2/3$, exchanging the elementary and transcendental cases.

\medskip
\noindent
\paragraph{Auto-extracting form: closed-form antiderivative.} The two-argument form parses the integrand and recursively invokes Mathematica's \texttt{Integrate} on the rational reductions:
\begin{lstlisting}[basicstyle=\footnotesize\ttfamily,frame=none]
In[]:= IntegrateGoursat[t^2/(t^3 - 1)^(1/3), t]
\end{lstlisting}
returns $\tfrac{1}{2}(t^3-1)^{2/3}$, the closed form of Example~\ref{ex:Ft2} obtained by integrating $\text{J2integrand}(u) = u$ and back-substituting $u\mapsto (t^3-1)^{1/3}$. When the diagnostic verdict is \texttt{"obstructed"}, the auto-extracting form returns the diagnostic Association unchanged rather than attempting a closed form: e.g.\ \texttt{IntegrateGoursat[t/(t\textasciicircum 3 - 1)\textasciicircum(1/3), t]} returns the obstruction $H_1 = z$, signalling non-elementarity (definitive in this case, since $F=t$ is polynomial).

\medskip
\noindent
The diagnostic test is purely algebraic and runs in time polynomial in the degrees of $R$ and $F$ (dominated by the cost of the rational-function arithmetic and root computation). Because no symbolic-integration engine is invoked during the diagnostic phase, the ``elementary'' verdict is a constructive certificate: the rational integrand returned in the \texttt{"reductions"} (square-root case) or in the \texttt{"J*integrand"} fields (cube-root cases) makes the closed form computable by hand from elementary techniques (partial fractions, Euler substitution). Unlike the Risch--Trager algorithm, the package gives a direct geometric obstruction --- the witnessing rational function on the cube-root cover --- in the spirit of the constructive proofs of Theorems~\ref{thm:goursat1}--\ref{thm:cubic-goursat} and Corollary~\ref{cor:two-thirds}. For polynomial $F\in\C[t]$, the obstructed verdict is a complete certificate of non-elementarity (Theorem~\ref{thm:cubic-goursat}(iii) and Remark~\ref{rmk:f-poly-sufficient}); for rational $F$ with poles at non-branch points of the cube-root cover, the verdict identifies the obstructive eigencomponent but a complete decision requires the analysis of residues on $Y_1$ in the spirit of~\cite{Trager1984,Bronstein2005}. By Corollary~\ref{cor:two-thirds}(iii), the same dichotomy on the full algebraic function field $\C(t,y)$, $y^3=R(t)$, reduces to two cube-root sub-problems, each handled by the criteria above.


\begin{thebibliography}{99}

\bibitem{Bertrand1990}
D.~Bertrand,
\emph{Extensions de $\mathcal{D}$-modules et groupes de Galois diff\'erentiels},
in: \emph{p-adic analysis (Trento, 1989)}, Lecture Notes in Math., vol.~1454, Springer, 1990, pp.~125--141.

\bibitem{Blake2020}
S.~Blake,
\emph{A simple method for computing some pseudo-elliptic integrals in terms of elementary functions},
preprint, 2020. arXiv:2004.04910.

\bibitem{Bronstein1990}
M.~Bronstein,
\emph{Integration of elementary functions},
J.~Symbolic Comput.\ \textbf{9} (1990), 117--173.

\bibitem{Bronstein2005}
M.~Bronstein,
\emph{Symbolic Integration I: Transcendental Functions},
2nd ed., Algorithms and Computation in Mathematics, vol.~1, Springer, 2005.

\bibitem{ChowlaSelberg1967}
S.~Chowla and A.~Selberg,
\emph{On Epstein's zeta function},
J.~reine angew.\ Math.\ \textbf{227} (1967), 86--110.

\bibitem{CrespoHajto2011}
T.~Crespo and Z.~Hajto,
\emph{Algebraic Groups and Differential Galois Theory},
Graduate Studies in Mathematics, vol.~122, American Mathematical Society, 2011.

\bibitem{Davenport1981}
J.~H.~Davenport,
\emph{On the Integration of Algebraic Functions},
Lecture Notes in Computer Science, vol.~102, Springer, 1981.

\bibitem{Goursat1887}
\'E.~Goursat,
\emph{Note sur quelques int\'egrales pseudo-elliptiques},
Bulletin de la S.~M.~F., \textbf{15} (1887), 106--120.

\bibitem{Gross1978}
B.~H.~Gross,
\emph{On the periods of abelian integrals and a formula of Chowla and Selberg} (with an appendix by D.~E.~Rohrlich),
Invent.\ Math.\ \textbf{45} (1978), 193--211.

\bibitem{Khovanskii2014}
A.~Khovanskii,
\emph{Topological Galois Theory: Solvability and Unsolvability of Equations in Finite Terms},
Springer Monographs in Mathematics, Springer, 2014.

\bibitem{Magid1994}
A.~R.~Magid,
\emph{Lectures on Differential Galois Theory},
University Lecture Series, vol.~7, American Mathematical Society, 1994.

\bibitem{Risch1969}
R.~H.~Risch,
\emph{The problem of integration in finite terms},
Trans.\ Amer.\ Math.\ Soc.\ \textbf{139} (1969), 167--189.

\bibitem{Risch1970}
R.~H.~Risch,
\emph{The solution of the problem of integration in finite terms},
Bull.\ Amer.\ Math.\ Soc.\ \textbf{76} (1970), 605--608.

\bibitem{Ritt1948}
J.~F.~Ritt,
\emph{Integration in Finite Terms},
Columbia University Press, 1948.

\bibitem{Rosenlicht1972}
M.~Rosenlicht,
\emph{Integration in finite terms},
Amer.\ Math.\ Monthly \textbf{79} (1972), 963--972.

\bibitem{Trager1984}
B.~M.~Trager,
\emph{Integration of algebraic functions},
Ph.D.\ thesis, MIT, 1984.

\bibitem{vanderPutSinger2003}
M.~van der Put and M.~F.~Singer,
\emph{Galois Theory of Linear Differential Equations},
Grundlehren der mathematischen Wissenschaften, vol.~328, Springer, 2003.

\bibitem{WelzNewsgroup}
M.~Welz,
\emph{Postings on Goursat-type pseudo-elliptic integrals and their cube-root analogs}, sci.math.symbolic newsgroup, 2012--2024.\\
\url{https://groups.google.com/g/sci.math.symbolic/search?q=goursat}.

\end{thebibliography}
\end{document}